\definecolor{BrickRed}{rgb}{.625,.25,.25}
\definecolor{markergreen}{rgb}{0.6, 1.0, 0}
\theoremstyle{plain}
\newtheorem{theorem}{Theorem}
\newtheorem{proposition}[theorem]{Proposition}
\newtheorem{lemma}[theorem]{Lemma} 
\theoremstyle{definition} 
\providecommand{\Eref}[1]{Equation~(\ref{#1})} 
\definecolor{NavyBlue}{rgb}{0,0,.5}
\newcommand{\E}{{\mathbb{E}}}
\newcommand{\p}{{\bf P}}
\newcommand{\pas}{\text{{\bf P}--a.s.}}
\newcommand{\e}{{\bf e}}
\newcommand{\dd}{{\rm d}}
\providecommand{\Ncdf}{{\rm N}}
\newcommand{\n}{{\rm n}}
\newcommand{\1}{\textbf{1}}
\newcommand{\argmin}{\operatornamewithlimits{argmin}}
\providecommand{\osigma}{\ensuremath{\overline\sigma}}
\providecommand{\vloan}{\ensuremath{V_{\text{loan}}}}
\providecommand{\vtranche}{\ensuremath{V}}
\begin{document}
\title{\Large\bf Valuation of CDOs on inhomogeneous asset pools with
  an application to structured climate financing} %
\title{\Large\bf Structured climate financing: valuation of CDOs on
  inhomogeneous asset pools} %
\author{N. Packham\footnotemark}%
\maketitle

\begin{abstract}
\noindent Recently, a number of structured funds have emerged as
  public-private partnerships with the intent of promoting  
  investment in renewable energy in emerging markets. These funds
  seek to attract institutional investors by tranching the asset pool
  and issuing senior notes with a high credit quality. Financing of
  renewable energy (RE) projects is achieved via two channels: small
  RE projects are financed indirectly through local banks that draw
  loans from the fund's assets, whereas large RE projects
  are directly financed from the fund. In a bottom-up Gaussian copula
  framework, we examine the diversification properties and RE exposure
  of the senior tranche. To this end, we introduce the LH++ model,
  which combines a homogeneous infinitely granular loan portfolio with
  a finite number of large loans. Using expected tranche percentage
  notional (which takes a similar role as the default probability of a
  loan), tranche prices and tranche sensitivities in RE loans, we
  analyse the risk profile of the senior tranche. We show how the mix
  of indirect and direct RE investments in the asset pool affects the
  sensitivity of the senior tranche to RE investments and how to
  balance a desired sensitivity with a target credit quality and
  target tranche size. 
\end{abstract}

\noindent Keywords: Renewable energy financing, structured finance,
CDO pricing, LH++  model \medskip%

\noindent JEL codes: C61, G13, G32%

\section{Introduction}
\label{sec:introduction}

We\renewcommand{\thefootnote}{\fnsymbol{footnote}}%
\footnotetext[1]{%
  \noindent Natalie Packham, Berlin School of Economics and Law,
  Badensche Str.\ 52, 10825 Berlin, Germany. Email:
  packham@hwr-berlin.de \smallskip\\
  \noindent I would like to express my thanks to Jean-David Fermanian,
  Michael Kalkbrener, Ulf Moslener, Radu Tunaru, Ursula Walther and
  Fabian Woebbeking for helpful discussions and comments.} %
\renewcommand{\thefootnote}{\arabic{footnote}}%
consider the problem of valuing and optimally designing structured
finance instruments when the underlying asset pool is
inhomogeneous. The standard in credit portfolio modelling is to assume
a homogeneous credit portfolio, for example in the ``Basel
II''-formula \citep{Gordy2003}, or in the valuation of collateralised
debt obligations (CDOs)
\citep[e.g.][]{Gregory2004,Andersen2004,Hull2007}. In the context of
structured renewable energy financing, asset pools typically consist
of sub-portfolios of different loan types. In this paper, we develop
the necessary tools for pricing and risk management of such structured
products and we explore different aspects of optimally designing asset
pools and related structured products.

To finance sustainable growth in developing economies, governments and
government agencies from various countries, such as Germany, Denmark
and the Netherlands,
are seeking to leverage available financing by attracting private
investors in microfinance investments.\footnote{See e.g.\ the
  following quote from the
  \href{http://www.bmz.de/en/issues/wirtschaft/nachhaltige_wirtschaftsentwicklung/finanzsystementwicklung/mikrofinanzierung/index.html}{website}
  of the Federal Ministry for Economic Cooperation and Development
  (BMZ): ``Involving private investors in microfinance institutions
  and microinsurance funds offers a huge potential. In future, the BMZ
  would like to encourage private-sector involvement to a greater
  extent, and thus facilitate responsible and sustainable investment
  in the financial sector in developing countries. This will also
  spawn numerous opportunities on which other sound development
  projects can build.''} %
Paired with the aim to promote investment in Renewable Energy (RE),
Energy Efficiency, or more generally Green Finance projects, a number
of structured climate funds have been set up as public-private
partnerships.\footnote{See e.g.\ the following quote from the
  \href{http://www.bmz.de/en/issues/wirtschaft/nachhaltige_wirtschaftsentwicklung/finanzsystementwicklung/green_finance/index.html}{website}
  of the BMZ: ``Green finance is an innovative approach of German
  development cooperation. The financial sector in cooperation
  countries becomes part of the transition process to a low carbon,
  resource efficient economy and to improved adjustment to climate
  change.''} %
Examples are the {\em Global Climate Partnership Fund},
\href{http://gcpf.lu}{http://gcpf.lu}, the {\em Green for Growth
  Fund}, \href{http://www.ggf.lu}{http://www.ggf.lu} and the European
Energy Efficiency Fund,
\href{https://www.eeef.eu}{https://www.eeef.eu}. The GCPF, initiated
in 2010 by the German Federal Ministry for the Environment, Nature
Conservation and Nuclear Safety (BMU) and by KfW Development Bank,
issues a junior tranche (C-Shares, first loss, equity tranche), a
mezzanine tranche (B-Shares), a senior tranche (A-Shares) and a
super-senior tranche (Notes). Notes and Class A shares are targeted at
private investors to leverage the amount invested in Class B and C
shares, which are typically held by public investors. In 2018, 91\% of
the fund's asset pool consisted of indirect financing of RE projects
through financial institutions, while 9\% were direct investments in
RE (a significant increase from 1.2\% in Q1/2014).\footnote{GCPF
  Annual Report 2018, GCPF Portfolio Report for the quarter ending on
  31.3.2014}

In such a fund, the equity tranche bears the first losses that occur
in the asset pool. If the equity tranche gets wiped out by defaults,
the mezzanine tranche bears the next losses, etc. This cash flow
structure creates a buffer against credit losses for holders of the
senior tranches,
giving the senior tranches a superior credit quality compared to the
lower tranches.\footnote{%
  For an introduction to structured credit, the reader is referred to
  \citep[e.g.][]{Duffie2003,OKane2008}.} %
It is this risk transfer that allows to attract private investors who
are seeking high credit quality investments: Institutional investors
who are interested in investments in innovative asset classes, such as
Green Finance or emerging markets, for example to benefit from
diversification effects \citep[e.g.][]{Krauss2009,Dorfleitner2011},
may be prevented from direct investments due to credit risk
constraints such as credit rating restrictions.
%
Creating tranches of different seniority is therefore a mechanism to
attract public and private capital into climate financing.

\begin{figure}[t]
  \centering \includegraphics[scale=.7]{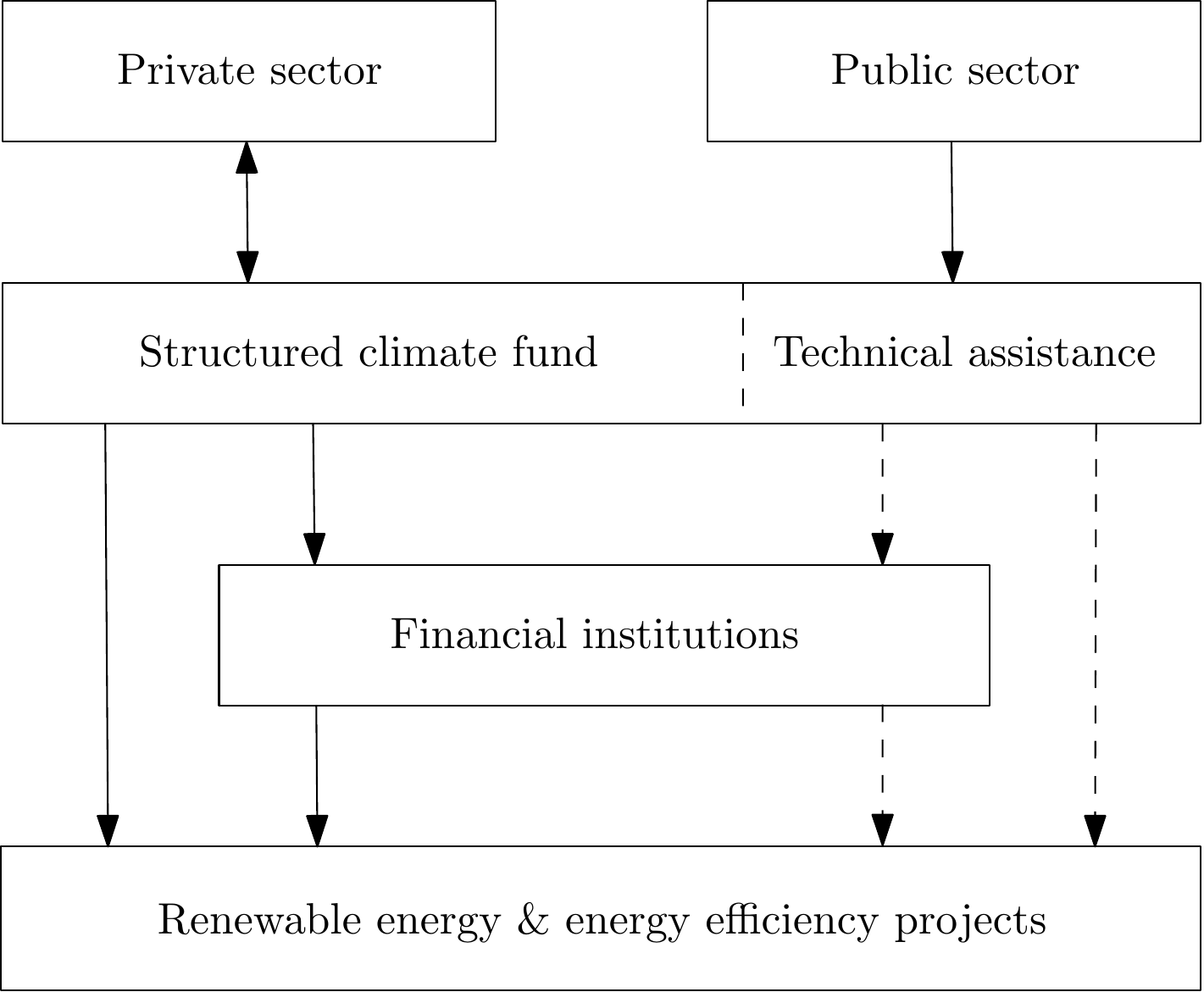}
  \caption{Operational setup of a structured climate fund. Adapted
    from:
    \href{https://www.gcpf.lu/files/assets/images/reporting_news/insights/2018/2018-06-14/GCPF_Corporate_Brochure.pdf}{GCPF,
      Corporate Brochure, 2018}}
  \label{fig:fund_structure}
\end{figure}

As mentioned above, only a small proportion of loans in the asset pool
are direct RE investments.  To enable microfinancing of RE projects,
structured funds typically engage with local banks in developing
countries (see Figure \ref{fig:fund_structure}). These banks act as
intermediaries by drawing funds from the fund's asset pool and lending
them to local borrowers for RE projects. 
The asset
pool of such a structured fund consists therefore primarily of claims
against local banks in emerging market and developing countries. While
it is ensured that all capital invested into the structured fund is
channeled into RE projects, the large proportion of indirect financing
creates exposure mainly to regional banks in developing and emerging
countries, and to RE projects only to a lesser extent. This may be
unsatisfactory for an institutional investor seeking exposure to the
RE sector in order to diversify their existing portfolio. On the other
hand, if the asset pool consisted only of (large) direct exposures to
RE projects, diversification could be too low to provide a
reasonably-sized senior tranche.

The objective of this paper is two-fold: First, extending the
LH+-model by \citep{Greenberg2004}, we develop a Merton-type model for
asset pools that consists of a homogeneous infinitely granular
sub-portfolio and a finite number of individual homogeneous loans. We
provide closed pricing formulas for the fund's tranches. Second, we
develop a solution for optimal structuring taking into account both
the exposure towards RE and a desired size of the senior
tranche(s). More specifically, we derive the optimal portfolio mix of
indirect and direct investments maximising the sensitivity to the RE
sector, given a desired size and credit quality of the (super) senior
tranche.

The structure of the paper is as follows: In Section
\ref{sec:model-setup} we develop the LH++ model. Section
\ref{sec:loan-cdo-valuation} provides closed formulas for the
valuation loans and CDO tranches. In Section
\ref{sec:indirect-re-loans}, we determine the parameters (default
probabilities and asset correlations) of banks' when adding a (small)
RE loan to their balance sheet. In Section \ref{sec:cdo-sensitivities}
we introduce CDO tranche sensitivities with respect to the RE sector,
solve for the optimal asset pool structure and senior tranche size and
give examples based on publicly available data on existing structured
funds. Section \ref{sec:conclusion} concludes.

\section{Model setup}
\label{sec:model-setup}

We consider a stylised model, which allows to derive a number of
analytical results. The asset pool consists of loans whose default
probabilities can be determined from a Merton-type asset value
approach \citep{Merton1974}. Two types of loans are found in the asset
pool: a sub-portfolio of homogeneous small loans to banks, which in
turn finance small RE projects; several homogeneous large direct RE
investments. To model these two types of different loans, we extend
the {\em LH+ model\/} developed by \citep{Greenberg2004} to the {\em
  LH++ model\/}, which couples an infinitely granular homogeneous
portfolio with a finite number of large loans. Closed valuation
formulas for CDO tranches are derived in Section
\ref{sec:loan-cdo-valuation}, which in turn allows to analyse the
credit riskiness of senior tranches and their dependence on the loan
structure and correlations in the asset pool.

\subsection{Gaussian copula framework}
\label{sec:gauss-copula-fram-1}

Consider a portfolio of $n$ loans (obligors) and denote their random
times of default by $(\tau_1, \ldots, \tau_n)$. The loss associated
with loan $i$ is $L_i=N_i \cdot (1-R_i)$, where $N_i$ denotes the
exposure-at-default and $R_i$ denotes the recovery rate of loan
$i$. Exposure-at-default and recoveries are assumed to be constant,
that is, neither random nor time-dependent, which is a reasonable
assumption in the context of loans. The overall portfolio loss at time
$t$ is given by
$L_t=\displaystyle\sum_{i=1}^n L_i\, \1_{\{\tau_i\leq t\}}$.

The valuation of loans and CDO tranches requires as input term
structures of default probabilities, both as univariate and as
multivariate distributions. The Gaussian copula framework introduced
by \cite{Li2000} then provides a parsimonious way of modelling these
quantities. Univariate default probabilities, $\p(\tau_i\leq t)$,
$t\geq 0$, $i=1,\ldots, n$, are assumed to be given, for example,
implied from credit spreads observed in the market. If there is no
term structure available in the market, it is common to assume that
the default time follows an exponential distribution with constant
hazard rate $\lambda_i$, so that
\begin{equation}
  \label{eq:24}
  p_{i,t}=\p(\tau_i\leq t) = 1-\e^{-\lambda_i t}, \quad t\geq 0. 
\end{equation}
The hazard rate is calibrated to one given default probability or to
one market credit spread $s_i$ via the relationship
$\lambda_i=s_i/(1-R_i)$.

Joint default probabilities are modelled by the Gaussian copula via
\begin{equation}
  \label{eq:25}
  \p(\tau_i\leq t, \tau_j\leq t) = \Ncdf_2(\Ncdf^{(-1)}(p_{i,t}),
  \Ncdf^{(-1)} (p_{j,t});\rho_{ij}),
\end{equation}
where $\rho_{ij}$ is the so-called {\em asset correlation} and where
$\Ncdf_2(x,y;\rho)$ denotes the bivariate standard normal distribution
function with correlation parameter $\rho$. The terminology asset
correlation originates from the Merton model \citep{Merton1974},
where, for a fixed time horizon $T$, the right-hand side of
(\ref{eq:25}) describes the probability of two firms' ability-to-pay
variables, $Y_{i,T}$ and $Y_{j,T}$ (standard normally distributed)
jointly falling below their so-called default thresholds $c_i,c_j$:
\begin{equation*}
  \p(Y_{i,T}<c_i, Y_{j,T}<c_j) = \Ncdf(c_i,c_j; \rho_{ij}). 
\end{equation*}
One can think of the ability-to-pay variables as a standardised
version of a firm's asset value, and the default thresholds
representing the level of debt, with default occuring if the asset
value falls below the debt level.

In a one-factor model approach, e.g.\ \citet{Vasicek1987}, the asset
correlations enter via a systematic factor, that is, the normalised
asset return can be decomposed as
\begin{equation*}
  Y_{i,t} = \sqrt{\rho_i} V_t + \sqrt{1-\rho_i} \varepsilon_{i,t},
\end{equation*}
with $V_t$ the {\em systematic factor\/} and
$\varepsilon_{1,t},\ldots, \varepsilon_{t,n}$ the {\em idiosyncratic
  factors}, all of which are independent standard normally distributed
random variables.  Conditioning on the joint aggregate factor yields,
for $\rho_i<1$,
\begin{equation}
  \label{eq:12}
  \p(\tau_i\leq t|V_t) = \p(Y_{i,t}\leq \Ncdf^{(-1)}(p_{i,t})|V_t) =
  \Ncdf\left(\frac{\Ncdf^{(-1)}(p_{i,t})-\sqrt{\rho_i} V_t} {\sqrt{1-\rho_i}}\right). 
\end{equation}
Moreover, conditional on $V$, the default times are independent, that
is,
\begin{equation*}
  \p(\tau_1\leq t, \ldots, \tau_n\leq t|V_t) = \prod_{i=1}^n
  \p(\tau_i\leq t|V_t).
\end{equation*}

\subsection{Infinitely granular homogeneous portfolio}
\label{sec:infin-gran-homog}

A common assumption in credit portfolio modelling is to assume that
the portfolio is homogeneous and ``infinitely granular'', which means
that it consists of infinitely many infinitely small homogeneous
components. This stylised portfolio assumption, sometimes called {\em
  large homogeneous portfolio (LHP)\/} was first introduced by
\citep{Vasicek1991} and has many uses in credit portfolio risk; for
example, it provides the basis for the so-called ``Basel II''-formula
\cite{Gordy2003}. In the LHP, idiosyncratic risk is diversified away,
leaving only systematic risk. Formally, by the law of large numbers,
since the obligors are conditionally independent, the percentage loss
conditional on $V_t$ is given by (we drop all indices because of the
homogeneity assumption)
\begin{equation}
  \label{eq:3}
  L_t=\lim_{n\rightarrow\infty} (1-R)\frac{1}{n} \sum_{i=1}^n
  \1_{\{\tau_i\leq t\}} =(1-R)\, 
  \Ncdf\left(\frac{\Ncdf^{(-1)}(p_t)-\sqrt\rho V_t} 
    {\sqrt{1-\rho}}\right) \quad\pas
\end{equation}
Solving for $V_t$, the time-$t$ loss distribution can be written as
\begin{equation*}
  \p(L_t\leq x) =
  \Ncdf\left(\frac{\Ncdf^{(-1)}(x/(1-R))\sqrt{1-\rho}-\Ncdf^{(-1)}(p_t)} 
    {\sqrt{\rho}}\right), \quad x\geq 0. 
\end{equation*}

\subsection{LH++ model}
\label{sec:lh++-model}

As outlined above, a typical asset pool of a structured climate fund
mixes small loans to financial institutions with large direct
financing of RE projects. Hence, the infinitely granular homogeneous
portfolio assumption will be justified only for the part of the
investment pool consisting of loans issued to regional financial
institutions. The {\em LH+ model\/} developed by \citet{Greenberg2004}
incorporates one loan with different characteristics into the asset
pool, which otherwise consists of an infinitely granular homogeneous
portfolio (see also Section 17.3 of \citet{OKane2008}). Extending the
model to allow for several homogeneous loans into the asset pool
conveniently allows to model the direct RE investments. This gives
rise to the {\em LH++ model}, which we introduce here.

Aside from the indirect RE investments through loans to financial
institutions, the asset pool consists of $n$ RE loans that are
modelled each by the Merton asset value model with recovery rate
$R_0$, fractional notional $N_0$, default probability $p_{0,t}$ and
default times $\tau_1,\ldots, \tau_n$. The default probabilities are
linked to standard normally distributed ability-to-pay variables
$X_{1,t}, \ldots, X_{n,t}$ and default threshold $c_0$ via
$p_{0,t} = \p(\tau_k\leq t) = \p(X_{k,t}\leq c_0)$.  The asset
correlation between RE loans is $\rho_0$, which in a model with a
single factor $V_t$ translates into a correlation $\sqrt{\rho_0}$
between an RE loan and the factor, and into a correlation
$\sqrt{\rho\, \rho_0}$ between the homogeneous portfolio and an RE
loan.\footnote{ The model can be extended to include inter-sector
  correlations as well, see e.g.\ \citep{Duellmann2008}. In this
  setting, one would model two sector factors, $V_B$ and $V_{RE}$,
  which are in turn correlated.  }
The fractional notional of the homogeneous portfolio is denoted by
$N$, so that $N+n\, N_0=1$.  The portfolio loss variable conditional
on $V_t$ is then
\begin{equation}
  \label{eq:28}
  L_t = \sum_{k=1}^n N_0(1-R_0)\, \1_{\{\tau_k\leq t\}} + N(1-R)
  \Ncdf\left( \frac{-\sqrt{\rho}V_t + c_t}
    {\sqrt{1-\rho}} \right),
\end{equation}
where $c_t=\Ncdf^{(-1)}(p_t)$ denotes the default threshold associated
with loans in the LHP.

\begin{proposition}
  \label{prop:lossproba}
  The time-$t$ loss probabilities are given by
  \begin{align*}
    \p(L_t> \alpha) = \sum_{k=0}^n \binom{n}{k}
    \p(V_t\leq A_t(\alpha,k), X_{1,t}\leq c_0, \ldots, X_{k,t}\leq c_0,
    X_{k+1,t}>c_0, \ldots, X_{n,t}>c_0),
  \end{align*}
  where
  \begin{equation}
    \label{eq:1}
    A_t(\alpha,k)=\displaystyle\frac{1}{\sqrt{\rho}}
    \left(c_t-\Ncdf^{(-1)}\left(0\vee\left(1 \wedge\frac{\alpha-k
            N_0(1-R_0)} {N(1-R)}\right)\right)\, \sqrt{1-\rho}\right),
  \end{equation}
  with $\vee$ and $\wedge$ denoting the maximum and the minimum
  operator, respectively.  The random vector
  $(V_t, X_{1,t}, X_{2,t}, \ldots, X_{n,t})$ follows a multivariate
  normal distribution with mean vector $0$ and covariance matrix equal
  to the correlation matrix
  \begin{equation*}
    \bm\tilde\Sigma =
    \begin{pmatrix}
      1 & \sqrt{\rho_0} & \sqrt{\rho_0} & \cdots &\sqrt{\rho_0}\\
      \sqrt{\rho_0} & 1 & \rho_0 & \cdots & \rho_0\\
      \sqrt{\rho_0} & \rho_0 & 1 & \cdots & \rho_0\\
      \vdots & \vdots & \ddots & \ddots & \vdots\\
      \sqrt{\rho_0} & \rho_0 & \cdots & \rho_0 & 1
    \end{pmatrix}.
  \end{equation*}
\end{proposition}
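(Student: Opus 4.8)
The plan is to condition on which of the $n$ large RE loans default: on each such event the granular-portfolio term in \eqref{eq:28} is deterministic given $V_t$, so the whole loss $L_t$ becomes a monotone function of the single systematic factor $V_t$, which I can invert in closed form; I then average over the default pattern using exchangeability of the $X_{k,t}$.

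Concretely, fix a subset $S\subseteq\{1,\dots,n\}$ of size $k$ and restrict attention to the event $E_S=\{\tau_i\le t\text{ for }i\in S,\ \tau_j>t\text{ for }j\notin S\}$; since $\{\tau_k\le t\}=\{X_{k,t}\le c_0\}$, this is an event in $\sigma(X_{1,t},\dots,X_{n,t})$. On $E_S$ the sum in \eqref{eq:28} equals $kN_0(1-R_0)$, so
\[
  L_t = kN_0(1-R_0) + N(1-R)\,\Ncdf\!\left(\frac{-\sqrt\rho\,V_t+c_t}{\sqrt{1-\rho}}\right)\qquad\text{on }E_S,
\]
which is continuous and strictly decreasing in $V_t$ wherever $\Ncdf\big((-\sqrt\rho\,V_t+c_t)/\sqrt{1-\rho}\big)\in(0,1)$. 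Solving $N(1-R)\Ncdf\big((-\sqrt\rho\,V_t+c_t)/\sqrt{1-\rho}\big)=\alpha-kN_0(1-R_0)$ for $V_t$ yields exactly $A_t(\alpha,k)$ of \eqref{eq:1}, so that $\{L_t>\alpha\}\cap E_S=\{V_t\le A_t(\alpha,k)\}\cap E_S$ up to a $\p$-null set (the inequality may be taken non-strict because $V_t$ has a density). The one point that genuinely needs care — and the only real subtlety in the argument — is the truncation $0\vee(1\wedge\cdot)$ inside $\Ncdf^{(-1)}$: if $\alpha\le kN_0(1-R_0)$ then on $E_S$ the loss exceeds $\alpha$ for every value of $V_t$, whereas if $\alpha-kN_0(1-R_0)\ge N(1-R)$ it never does; clamping the argument of $\Ncdf^{(-1)}$ to $[0,1]$ sets $A_t(\alpha,k)=+\infty$, resp.\ $-\infty$, in precisely these two degenerate cases, so the displayed identity holds in all cases.

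Next I would sum over default patterns. The events $E_S$, over all $S\subseteq\{1,\dots,n\}$, partition $\Om$, so $\p(L_t>\alpha)=\sum_{k=0}^n\sum_{|S|=k}\p(\{L_t>\alpha\}\cap E_S)$. Because $A_t(\alpha,k)$ depends on $S$ only through $k=|S|$, and because the vector $(X_{1,t},\dots,X_{n,t})$ is exchangeable — all pairwise correlations equal $\rho_0$, all correlations with $V_t$ equal $\sqrt{\rho_0}$ — each $S$ with $|S|=k$ contributes the same probability, namely $\p(V_t\le A_t(\alpha,k),\,X_{1,t}\le c_0,\dots,X_{k,t}\le c_0,\,X_{k+1,t}>c_0,\dots,X_{n,t}>c_0)$. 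There are $\binom{n}{k}$ such subsets, which gives the stated formula.

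Finally, the distributional assertion follows directly from the one-factor representation $X_{k,t}=\sqrt{\rho_0}\,V_t+\sqrt{1-\rho_0}\,\varepsilon_{k,t}$ with $V_t,\varepsilon_{1,t},\dots,\varepsilon_{n,t}$ independent standard normal: the vector $(V_t,X_{1,t},\dots,X_{n,t})$ is a linear image of a Gaussian vector, hence Gaussian with mean $0$, and a one-line computation gives $\var(V_t)=\var(X_{k,t})=1$, $\cov(V_t,X_{k,t})=\sqrt{\rho_0}$ and $\cov(X_{i,t},X_{j,t})=\rho_0$ for $i\ne j$, which is precisely the correlation matrix $\bm\tilde\Sigma$. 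I expect the bookkeeping around the truncation in $A_t(\alpha,k)$ to be the only place where one can slip; everything else is a routine computation.
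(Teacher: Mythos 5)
Your proof is correct and follows essentially the same route as the paper's: decompose over the default pattern of the $n$ RE loans, use homogeneity/exchangeability to collapse the subsets of size $k$ into a binomial coefficient, rewrite $\{L_t>\alpha\}$ via \eqref{eq:28} as $\{V_t\le A_t(\alpha,k)\}$, and read off the Gaussian structure from the one-factor representation. You simply spell out the inversion and the truncation edge cases that the paper leaves implicit.
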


\begin{proof}
  Because of the homogeneity of the portfolio of direct RE loans, we
  have
  \begin{align*}
    \p(L_t> \alpha)
    &= \sum_{k=0}^n \binom{n}{k}\p(L_t>\alpha,\tau_1\leq t,\ldots,
      \tau_k\leq t, \tau_{k+1}>t, \ldots, \tau_n>t)\\
    &= \sum_{k=0}^n \binom{n}{k}\p(L_t>\alpha,X_{1,t}\leq c_0,\ldots,
      X_{k,t}\leq c_0, X_{k+1,t}>c_0, \ldots, X_{n,t}>c_0),
  \end{align*}
  and the first claim follows by re-writing $\{L_t>\alpha\}$ using
  \eqref{eq:28}. That $(V_t, X_{1,t}, \ldots, X_{n,t})$ follows a
  joint normal distribution follows from the single factor setting
  with $X_{k,t} = \sqrt{\rho_0} V_t + \sqrt{1-\rho_0} \xi_{k,t}$,
  where $\xi_{k,t}\Ncdf(0,1)$ independent of $V_t$.
\end{proof}

The special case $n=1$ corresponds to the LH+ model, and one can
obtain the formula given by \citep{Greenberg2004}:
\begin{equation}
  \label{eq:2}
  \p(L_t> \alpha) = \Ncdf(A_t(\alpha,0)) 
  - \Ncdf_2\left(A_t(\alpha,0), c_0; \sqrt\rho_0\right)
  + \Ncdf_2\left(A_t(\alpha,1), c_0;\sqrt\rho_0\right),
\end{equation}
where $\Ncdf_2$ denotes the bivariate standard normal distribution
function.

For large $n$, a significant speed-up in the numerical calculation can
be achieved by conditioning on $V_t$ and using the conditional
independence of $X_{1,t}, \ldots, X_{n,t}$, which gives
\begin{align*}
  \p(L_t>\alpha) &=\sum_{k=0}^n \binom{n}{k} \E\left[ \1_{\{V_t\leq
                   A_t(\alpha,k)\}}\, 
                   \Ncdf\left(\frac{c_0-\sqrt{\rho_0} V_t} 
                   {\sqrt{1-\rho_0}}\right)^k
                   \left(1-\Ncdf\left(\frac{c_0-\sqrt{\rho_0} V_t} 
                   {\sqrt{1-\rho_0}}\right)\right)^{n-k}\right] \\
                 &= \sum_{k=0}^n \binom{n}{k}
                   \int_{-\infty}^{A_t(\alpha,k)} \Ncdf\left(\frac{c_0-\sqrt{\rho_0} v}
                   {\sqrt{1-\rho_0}}\right)^k \left(1-\Ncdf\left(\frac{c_0-\sqrt{\rho_0} v}
                   {\sqrt{1-\rho_0}}\right)\right)^{n-k}\, \n(v)\, \dd v,
\end{align*}
where $n$ denotes the standard normal density function.

\section{Loan and CDO valuation}
\label{sec:loan-cdo-valuation}

In this section we derive analytic formulas for valuing loans and CDO
tranches in the LH++ framework.

\subsection{Loan valuation and credit spread}
\label{sec:loan-valuation}

The relation between a loan's credit spread and its survival
probability is as follows: Assume a loan with notional 1, maturing at
$T$ and continuously paying interest of $(r+s)$, where $r$ is the
risk-free interest rate and $s$ is the credit spread. If the loan
defaults prior to maturity, it pays a recovery $R$. The
discounted cash flows from the loan are therefore
\begin{equation*}
  (r+s) \int_0^{T\wedge \tau} \e^{-ru}\, \dd u + \e^{-rT}
  \1_{\{\tau>T\}} + R \e^{-r\tau} \1_{\{\tau\leq T\}}. 
\end{equation*}
At time $0$, the risk-neutral price of the loan is given by
\begin{equation}
  \label{eq:20}
  \vloan(s) = (r+s) 
    \int_0^T \e^{-r u} q(u)\, \dd
    u
  + \e^{-r T} q(T) + \int_0^T R\, \e^{-r u}\cdot
  -q(\dd u), 
\end{equation}
where $q(u)=\p(\tau>u)$ is the risk-neutral probability of survival
until time $u$ (conditional on no default until time $0$). 
As the no-arbitrage price at inception is $\vloan(s)=1$, the
no-arbitrage spread can be backed out from survival probabilities as
\begin{equation*}
  s_{\text{loan}} = \frac{1-\int_0^T R\, \e^{-ru}\cdot -q(\dd u) -
    \e^{_-r T} q(T)}
  {\int_0^T \e^{-ru} q(u)\, \dd u } - r.
\end{equation*}
In case the term structure of survival probabilities is determined by
a constant hazard rate, $q(u)=\e^{-\lambda u}$, $u>0$, the expressions
simplify to 
\begin{align}
  \vloan(s) &= (r+s) \frac{1-\e^{-(r+\lambda)T}} {r+\lambda} +
              \e^{-(r+\lambda)T} + R \frac{\lambda (1-\e^{-(r+\lambda)T})}
              {r+\lambda} \label{eq:7}\\
  s_{\text{loan}} &=\lambda (1-R)\label{eq:18},
\end{align}
where the last line is the so-called {\em credit triangle}.

\subsection{CDO mechanics and valuation}
\label{sec:cdo-valuation}

A CDO can be thought of as a special purpose vehicle consisting of
loans as assets and notes of different seniority as
liabilities. Proceeds from the asset pool, both coupon and redemption
payments, are paid to note holders according to their seniority: on a
payment date, senior note (tranche) holders are the first to receive
their promised coupon and redemption payments. Next, provided there
are sufficient proceeds from the asset pool, mezzanine note (tranche)
holders are served, and so on. Last-in-line are equity tranche
holders. As such, equity tranche holders are exposed to the highest
credit risk, bearing the first losses from the asset pool (the equity
tranche is sometimes called the ``first-loss piece''), while the
senior tranche enjoys a risk buffer as it is unaffected by losses
until the equity and mezzanine tranches are wiped out. For further
details on CDOs, the reader is referred to e.g.\
\citep{Bluhm2003,OKane2008}.

So far we have assumed a fixed default time horizon $T$. Valuation of
credit derivatives requires a term structure of default probabilities.
so that we now assume that all quantities of interest are
time-dependent, e.g.\ default probabilities are given by
$p_{i,t}=\p(\tau_i\leq t)=\Ncdf(c_{i,t})$, with $c_{i,t}$ the
time-dependent default threshold.

We continue to assume that the asset pool consists of an infinitely
granular portfolio of homogeneous obligors for the indirect RE
investments and of $n$ direct investments in homogeneous RE loans. The
tranche structure of a CDO on a notional amount of $1$ can be written
as a partition of $[0,1)$, where each tranche covers the loss in one
interval of the partition. In other words, there exist attachment,
resp.\ detachment points $0<\alpha_1 <\cdots< \alpha_k=1$ such that
the $i$-th tranche covers losses in the interval
$[\alpha_{i-1},\alpha_i)$. The time-$t$ loss of the $i$-th tranche can
be written as
\begin{equation}
  \label{eq:4}
  L_{i,t} =\min(L_t,\alpha_i)-\min(L_t,\alpha_{i-1}) = (L_t\wedge
  \alpha_i) - (L_t\wedge \alpha_{i-1}),
\end{equation}
where the time-$t$ portfolio loss $L_t$ is given by \eqref{eq:28}.
The probability that the $i$-th tranche is hit by a loss until time
$t$, $\p(L_t>\alpha_{i-1})$, is given by Proposition
\ref{prop:lossproba}.

Pricing a CDO tranche with attachment point $\alpha_{i-1}$ and
detachment point $\alpha_i$ requires the time-zero {\em tranche
  survival curve\/}, which expresses the expected percentage survival
notional at time $t$, and is given by
\begin{equation}
  \label{eq:5}
  q_i(t) = q(t,\alpha_{i-1},\alpha_i) =
  1-\frac{\E[L_{i,t}]}{\alpha_i-\alpha_{i-1}} =
  1-\frac{\E\left[(L_t\wedge\alpha_i) -
      (L_t\wedge \alpha_{i-1})\right]}{\alpha_i-\alpha_{i-1}}. 
\end{equation}
An explicit expression for \Eref{eq:5} is obtained from the following
Proposition.
\begin{proposition}
  \label{prop:exploss}
  \begin{align*}
    \E[L_t\wedge \alpha] %
    &= \alpha\, \p(L_t>\alpha)\, + \sum_{k=0}^n \binom{n}{k}
      k N_0(1-R_0)\\ 
    &\phantom{=\, \sum\,\,\,\,} %
      \p(V_t\geq A_t(\alpha,k), X_{1,t}\leq c_0, \ldots, X_{k,t}\leq
      c_0, X_{k+1,t}>c_0, \ldots, X_{n,t}>c_0) \\
    &\phantom{=\,}%
      + \sum_{k=0}^n \binom{n}{k} N(1-R)\\
    &\phantom{=,\sum\,} %
      \p(Y_{1,t}\leq c_t, V_t\geq A_t(\alpha,k), X_{1,t}\leq c_0, \ldots,
      X_{k,t}\leq c_0, X_{k+1,t}>c_0, \ldots, X_{n,t}>c_0),
  \end{align*}
  where $A_t(\alpha,k)$ is given by \eqref{eq:1}.

  The random vector $(Y_{t,1}, V_t, X_{1,t}, \ldots, X_{n,t})$ follows
  a joint normal distribution with mean vector $0$ and covariance
  matrix equal to the correlation matrix
  \begin{equation}
    \label{eq:29}
    \bm \Sigma = 
    \begin{pmatrix}
      1 & \sqrt{\rho} & \sqrt{\rho\, \rho_0} &\sqrt{\rho\, \rho_0} &
      \cdots  & \sqrt{\rho\, \rho_0}\\
      \sqrt{\rho} & 1 & \sqrt{\rho_0} &\sqrt{\rho_0} & \cdots
      & \sqrt{\rho_0} \\
      \sqrt{\rho\, \rho_0} & \sqrt{\rho_0} & 1 & \rho_0 & \cdots& \rho_0\\
      \sqrt{\rho\, \rho_0} & \sqrt{\rho_0} & \rho_0 & 1 & \ddots & \rho_0\\
      \vdots & \vdots & \vdots &  \ddots & \ddots & \vdots \\
      \sqrt{\rho\, \rho_0} & \sqrt{\rho_0} & \rho_0 & \cdots & \rho_0
      & 1
    \end{pmatrix}.
  \end{equation}
\end{proposition}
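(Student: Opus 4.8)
The plan is to compute $\E[L_t\wedge\alpha]$ by writing $L_t\wedge\alpha = \alpha - (\alpha - L_t)^+ = \alpha\,\1_{\{L_t>\alpha\}} + L_t\,\1_{\{L_t\le\alpha\}}$, and then on the event $\{L_t\le\alpha\}$ substitute the explicit expression \eqref{eq:28} for $L_t$. As in the proof of Proposition~\ref{prop:lossproba}, I would first partition the sample space according to the number $k$ of defaulted direct RE loans, exploiting homogeneity to replace the sum over all $\binom{n}{k}$ subsets by the single representative event $\{X_{1,t}\le c_0,\ldots,X_{k,t}\le c_0, X_{k+1,t}>c_0,\ldots,X_{n,t}>c_0\}$ weighted by $\binom{n}{k}$. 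On each such piece, the conditional portfolio loss is $L_t = kN_0(1-R_0) + N(1-R)\,\Ncdf\bigl((c_t-\sqrt\rho V_t)/\sqrt{1-\rho}\bigr)$, which is a deterministic decreasing function of $V_t$; hence $\{L_t\le\alpha\}$ intersected with the $k$-default event is exactly $\{V_t\ge A_t(\alpha,k)\}$ intersected with that event, with $A_t(\alpha,k)$ the threshold from \eqref{eq:1} (the $0\vee(1\wedge\cdot)$ truncation handling the cases where the inequality is automatically true or false).

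Next I would split the contribution of $L_t\,\1_{\{L_t\le\alpha\}}$ on the $k$-th piece into its two additive components. The constant part $kN_0(1-R_0)$ contributes $\binom{n}{k}\,kN_0(1-R_0)\,\p(V_t\ge A_t(\alpha,k), X_{1,t}\le c_0,\ldots,X_{n,t}>c_0)$, which is the second displayed sum. For the LHP part $N(1-R)\,\Ncdf\bigl((c_t-\sqrt\rho V_t)/\sqrt{1-\rho}\bigr)$, I would re-interpret the factor $\Ncdf\bigl((c_t-\sqrt\rho V_t)/\sqrt{1-\rho}\bigr)$ as a conditional probability: introducing an auxiliary standard normal $Y_{1,t}=\sqrt\rho V_t+\sqrt{1-\rho}\,\eta_t$ with $\eta_t$ independent of everything else (the generic ability-to-pay variable of an LHP obligor, consistent with \eqref{eq:12}), one has $\Ncdf\bigl((c_t-\sqrt\rho V_t)/\sqrt{1-\rho}\bigr) = \p(Y_{1,t}\le c_t\mid V_t)$. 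Then by the tower property and conditional independence,
\[
  \E\Bigl[\Ncdf\Bigl(\tfrac{c_t-\sqrt\rho V_t}{\sqrt{1-\rho}}\Bigr)\1_{\{V_t\ge A_t(\alpha,k)\}}\1_{\{X_{1,t}\le c_0,\ldots\}}\Bigr]
  = \p\bigl(Y_{1,t}\le c_t,\,V_t\ge A_t(\alpha,k),\,X_{1,t}\le c_0,\ldots,X_{n,t}>c_0\bigr),
\]
which yields the third sum. Collecting the $\alpha\,\1_{\{L_t>\alpha\}}$ term (whose expectation is $\alpha\,\p(L_t>\alpha)$ by Proposition~\ref{prop:lossproba}) with these two sums gives the stated formula.

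Finally I would verify the covariance structure of $(Y_{1,t}, V_t, X_{1,t},\ldots,X_{n,t})$. All entries are standard normal by construction. Using $Y_{1,t}=\sqrt\rho V_t+\sqrt{1-\rho}\,\eta_t$ and $X_{k,t}=\sqrt{\rho_0}V_t+\sqrt{1-\rho_0}\,\xi_{k,t}$ with $\eta_t,\xi_{1,t},\ldots,\xi_{n,t},V_t$ mutually independent, one gets $\cov(Y_{1,t},V_t)=\sqrt\rho$, $\cov(V_t,X_{k,t})=\sqrt{\rho_0}$, $\cov(Y_{1,t},X_{k,t})=\sqrt\rho\sqrt{\rho_0}=\sqrt{\rho\rho_0}$, $\cov(X_{j,t},X_{k,t})=\rho_0$ for $j\ne k$, matching $\bm\Sigma$ in \eqref{eq:29}.

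I do not expect a serious obstacle here; the argument is essentially bookkeeping on top of Proposition~\ref{prop:lossproba}. The one point requiring a little care is the introduction of the auxiliary variable $Y_{1,t}$ and the justification that $\Ncdf\bigl((c_t-\sqrt\rho V_t)/\sqrt{1-\rho}\bigr)$ may be pulled inside a joint probability — i.e., that the enlargement of the probability space by an independent $\eta_t$ does not affect any of the already-defined quantities and that the conditioning/tower-property step is valid; but this is exactly the device already used implicitly in \eqref{eq:12} and \eqref{eq:3}, so it is routine. A secondary minor check is that the truncation $0\vee(1\wedge\cdot)$ inside $A_t(\alpha,k)$ correctly encodes the degenerate cases (when $\alpha - kN_0(1-R_0)$ is negative the $k$-default event already forces $L_t>\alpha$, and when it exceeds $N(1-R)$ it is compatible with any $V_t$), so that the formula remains valid termwise without separate case analysis.
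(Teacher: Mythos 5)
Your proposal is correct and follows essentially the same route as the paper's own proof: splitting off $\alpha\,\p(L_t>\alpha)$, partitioning by the number $k$ of defaulted RE loans with binomial weights, identifying $\{L_t\le\alpha\}$ with $\{V_t\ge A_t(\alpha,k)\}$ on each piece, and rewriting the LHP term as $\p(Y_{1,t}\le c_t\mid V_t)$ so the tower property and conditional independence given $V_t$ turn it into the stated joint probability. The covariance verification via the one-factor representations of $Y_{1,t}$ and $X_{k,t}$ likewise matches the paper's construction.
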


\begin{proof}
  Write
  \begin{equation*}
    \E[L_t\wedge\alpha] = \E\left[L_t\, \1_{\{L_t\leq \alpha\}}\right] +
    \alpha \p(L_t>\alpha).
  \end{equation*}
  The proof reduces to examining the expectation on the right-hand
  side, which can be written as
  \begin{equation}
    \label{eq:9}
    \E\left[L_t\, \1_{\{L_t\leq \alpha\}}\right] %
    = \sum_{k=0}^n \binom{n}{k} \E\left[L_t\, \1_{\{L_t\leq \alpha,
        \tau_1\leq t, \ldots, \tau_k\leq t, \tau_{k+1}>t\, \ldots,
        \tau_n>t\}}\right]. 
  \end{equation}
  The loss variable $L_t$, given by \eqref{eq:28}, can be decomposed
  into
  \begin{equation}
    L_t = \sum_{k=1}^n N_0(1-R_0)\, \1_{\{\tau_k\leq t\}} + N(1-R)
    \p(Y_{1,t}\leq c_t|V_t). 
  \end{equation}
  It is easily checked that, conditional on $k$ RE loan losses,
  $\{L_t\leq \alpha\} = \{V_t\geq A_t(\alpha,k)\}$.  Using that
  $\{\tau_k\leq t\}=\{X_k\leq c_0\}$, $k=1,\ldots, n$, each
  expectation on the right-hand side in \eqref{eq:9} can be written as
  \begin{multline*}
    \E\left[L_t\, \1_{\{L_t\leq \alpha, \tau_1\leq t, \ldots,
        \tau_k\leq
        t, \tau_{k+1}>t\, \ldots, \tau_n>t\}}\right]\\
    \begin{aligned}
      &= k N_0(1-R_0) \p(A_t(\alpha,k)\leq V_t, X_{1,t}\leq c_0,
      \ldots, X_{k,t}\leq c_0, X_{k+1,t}>c_0, \ldots,
      X_{n,t}>c_0)\\
      &+ N(1-R) \E\left[\p(Y_t\leq c|V_t)\, \1_{\{A_t(\alpha,k)\leq
          V_t, X_{1,t}\leq c_0, \ldots, X_{k,t}\leq c_0,
          X_{k+1,t}>c_0, \ldots, X_{n,t}>c_0\}}\right],
    \end{aligned}
  \end{multline*}
  and the expectation in the last line simplifies to
  \begin{multline*}
    \E\left[\p(Y_t\leq c|V_t)\, \1_{\{A_t(\alpha,k)\leq V_t,
        X_{1,t}\leq c_0, \ldots, X_{k,t}\leq c_0, X_{k+1,t}>c_0,
        \ldots,
        X_{n,t}>c_0\}}\right]\\
    \begin{aligned}
      &= \E\left[\E\left[\p(Y_t\leq c|V_t)\, \1_{\{A_t(\alpha,k)\leq
            V_t, X_{1,t}\leq c_0, \ldots, X_{k,t}\leq c_0,
            X_{k+1,t}>c_0, \ldots,
            X_{n,t}>c_0\}}|V_t\right]\right]\\
      &= \E\left[\p(Y_t\leq c|V_t)\, \p(A_t(\alpha,k)\leq V_t,
        X_{1,t}\leq c_0, \ldots, X_{k,t}\leq c_0, X_{k+1,t}>c_0,
        \ldots,
        X_{n,t}>c_0|V_t)\right]\\
      &= \p(Y_t\leq c,A_t(\alpha,k)\leq V_t, X_{1,t}\leq c_0, \ldots,
      X_{k,t}\leq c_0, X_{k+1,t}>c_0, \ldots, X_{n,t}>c_0),
    \end{aligned}
  \end{multline*}
  where the last line follows from the conditional independence of
  $Y_t$ and the other variables given $V_t$.
\end{proof}

If $n=1$ (the LH+ model), we obtain:
\begin{multline*}
  \E[L_t\wedge \alpha] %
  = \alpha \p(L_t > \alpha) %
  + N_0 (1-R_0) \left[ \Ncdf(c_0) - \Ncdf_2(A_1(\alpha,1), c_0;
    \sqrt{\rho_0})\right]\\ %
  + N(1-R)\left[ \Ncdf(c_t) - \Ncdf_2(c_t, A_t(\alpha,0); \sqrt{\rho})
    + \Ncdf_3(c_t,A_t(\alpha,0), c_0;\Sigma) - \Ncdf_3(c_t,
    A_t(\alpha,1),c_0;\Sigma)\right],
\end{multline*}
where $\Ncdf_k$ denotes the multivariate standard normal distribution
function for a $k$-dimensional vector.  As noted for Proposition
\ref{prop:lossproba}, the numerical computation of the multivariate
probabilities can be efficiently improved by exploiting the
conditional independence of the terms conditional on $V_t$.

To calculate the tranche survival curve \eqref{eq:5} via Proposition
\ref{prop:exploss} requires the time-$t$ PD's and correlations of the
loan portfolio as inputs. If available, the term structure of default
probabilities, $p(t)$, $t\geq 0$, can be derived from market data, for
example CDS spreads.

The $i$-th tranche pays (continuously) a coupon of $r+s_i$ on the
remaining tranche notional, where $r$ denotes the (constant) risk-free
interest rate. At maturity, the tranche pays the remaining
notional.\footnote{%
  If the recovery rate is greater than zero, then the spread earned by
  the collateral pool does not suffice to pay the required coupons to
  all tranche holders: the total notional is reduced by a fraction
  $1-R$ for each defaulted loan, but coupon payments are reduced by
  the entire notional of the loan as no coupon payments are made on
  the recovery rate. There are essentially two ways to resolve this
  discrepancy: In the first case, the notional based on which coupons
  are paid on the super senior tranche is reduced, therefore
  effectively reducing coupon payments on the super senior tranche
  (but without affecting the redemption of notional at maturity), cf.\
  Section 12.5.4 of \citep{OKane2008}. In the second case, coupon
  payments are paid according to the waterfall principle, that is,
  first the promised coupon payment to the super senior tranche is
  made, then to the senior tranche, and so forth, with the remainder
  paid to the equity tranche. This is the case treated in
  \citep{Bluhm2003a}. We shall essentially follow the second
  convention here, as it is natural to assume that the public
  institution (e.g.\ government) as the equity tranche holder is
  willing to waive its coupon anyway. On top, we shall assume that the
  public institution is prepared to ensure that all tranches (with the
  exception of the equity tranche) receive a fixed coupon payment
  proportional to the remaining tranche notional in case the
  collateral pool fails to generate the promised coupons. } By
risk-neutral valuation, the {\em percentage\/} value of the $i$-th
tranche at time $0$ is given by
\begin{equation}
  \label{eq:21}
  \vtranche(\alpha_{i-1}, \alpha_i; s) = (r+s) \int_0^T \e^{-ru} q_i(u)\, \dd
  u
  + \e^{-rT} q_i(T). 
\end{equation}
At inception, the no-arbitrage\footnote{%
  If a CDO tranche can be hedged, for example with a synthetic CDO
  tranche valued 0 at inception or with the reference portfolio, then
  the no-arbitrage price of $1$ arises. If the tranche cannot be
  replicated, then we define the price in this way.} %
percentage value of the tranche is $1$, so that backing out the spread
yields
\begin{equation*}
  s_i = \frac{1-\e^{-rT} q_i(T)}{\int_0^T \e^{-ru} q_i(u)\, \dd u} -r
\end{equation*}
Consequently, given PD's and correlations, the valuation of CDO
tranches can be done in an analytic way.

\section{Indirect renewable energy financing}
\label{sec:indirect-re-loans}

In this section, we derive the model parameters of a bank that draws
on the asset pool to lend out an RE loan. This has an impact on the
bank's credit quality and exposure to RE. We continue to work in a
Gaussian copula framework, but in order to determine the parameters
from enlarging the bank's balance sheet we now make the underlying
Merton model explicit.

The key idea of the Merton model \citep{Merton1974} is to model the
balance sheet of a firm that finances its assets by a single
zero-coupon bond maturing at time $T$ and equity. The firm defaults at
time $T$ if the asset value is below the debt notional and survives
otherwise. If the asset value is modelled as a Geometric Brownian
motion, then the bond value, probability of default and credit spread
can be determined from the Black-Scholes-Merton model.

More specifically, firm $i$ defaults when its time-$T$ asset value
$A_T^i$ is below its debt-value $D_T^i=\e^{rT}D_0^i$, where the
initial debt value $D_0^i$ is constant. If the asset value process
$(A_t^i)_{t\geq 0}$ follows a Geometric Brownian motion,
\begin{equation}
  \label{eq:26}
  A_t^i = A_0^i \, \e^{(r-1/2 \sigma_i^2) t + \sigma_i\, W_t},\quad
  t\geq 0,
\end{equation}
with $W$ a Brownian motion, then the time-$T$ asset log-return is
normally distributed
$\displaystyle \ln(A_T^i/A_0^i) \sim \Ncdf((r-1/2 \sigma_i^2) T,
\sigma_i^2 T)$, and the time-$T$ default probability of obligor $i$,
conditional on $\{\tau_i>0\}$, can be expressed as
\begin{equation}
  \label{eq:8}
  p_i =\p(\tau_i\leq T)= \p(A_T^i<D_T^i ) %
  = \p(Y_i<c_i) = \Ncdf(c_i),
\end{equation}
where $Y_i$ is standard normally distributed and
$c_i=\displaystyle\frac{\ln(\e^{rT} D^i_0/A_0^i) - rT + \sigma_{A^i}^2
  T/2}{\sigma_{A^i}\, \sqrt{T}}$.  Given the time-$T$ probability of
default $p_i$, we define $c_i:= \Ncdf^{(-1)}(p_i)$.

The dependence between two obligors $i$ and $j$ is expressed via their
{\em asset correlation}, given by
\begin{equation*}
  \rho_{ij} = \text{Corr}(Y_i, Y_j) = \text{Corr}(\ln A_T^i, \ln A_T^j),
\end{equation*}
and the probability of a joint default is given by
\begin{equation}
  \label{eq:22}
  \p(\tau_i\leq T, \tau_j\leq T) = \p(Y_i\leq c_i, Y_j\leq c_j) =
  \Ncdf(c_i, c_j; \rho_{ij}) = \Ncdf(\Ncdf^{(-1)}(p_i),
  \Ncdf^{(-1)}(p_j);\rho_{ij}). 
\end{equation}
Equation (\ref{eq:22}) is just the Gaussian copula framework
introduced in \eqref{eq:25}.

We now assume a bank with asset value $A_0$ and debt value $D_0$ at
time $0$.  Adding an RE loan with face value $R_0$ to the balance
sheet changes the asset value to $A_0+R_0$ and the debt value to
$D_0+R_0$. We assume that both the firms debt and the RE loan mature
at time $T$.  Prior to adding the RE loan, the bank's asset volatility
is $\sigma_B$, the bank's time-$T$ probability of default is $p_B$ and
the correlation among any two bank's is $\rho_B := \rho$. The firm
receiving the RE loan has an asset volatility of $\sigma_{R}$, PD of
$p_{R}$, and RE firms are correlated with asset correlation
$\rho_R:=\rho_0$. The bank's asset value (prior to issuance of the RE
loan) and the RE firm's asset value are correlated with
$\rho_{RB}:=\sqrt{\rho\, \rho_0}$.

We impose that after issuance of the RE loan, the assets' log return
is normally distributed,
\begin{equation*}
  \ln\left(\frac{A_T+R_T}{A_0+R_0}\right)\sim
  \Ncdf\left((r-1/2\osigma^2), \osigma^2\right). 
\end{equation*}

Pricing CDO tranches requires the bank's probability of default, which
in turn requires the bank's asset volatility, and the banks' asset
correlations. Assuming that the RE loan is small relative to the
bank's balance sheet, we approximate the {\em annual\/} log-return via
a first-order Taylor expansion around $A_1$,
\begin{equation}
  \label{eq:27}
  \ln(A_1+R_1) \approx \ln(A_1) + \frac{R_1}{A_1}. 
\end{equation}

\begin{proposition}
  \label{prop:3}
  Using the approximation \eqref{eq:27} gives
  \begin{align*}
    \osigma^2
    &= \text{Var}\left(\ln(A_1) + \frac{R_1}{A_1}\right)\\
    &= \sigma_B^2 + \frac{R_0^2}{A_0^2} \e^{2\sigma_B(\sigma_B -
      \rho_{RB} \sigma_R)} (\e^{\sigma_B^2 + \sigma_R^2 - 2 \rho_{RB}
      \sigma_B \sigma_R} -1) %
      - 2\sigma_B\frac{R_0}{A_0} (\sigma_B - \rho_{RB} \sigma_R)
      \e^{\sigma_B^2 - \rho_{RB}\sigma_R\sigma_B}, \\
    \overline{\rho}_{ij} &= \text{Corr}\left(\ln(A_1^i) +
                           \frac{R_1^i}{A_1^i}, \ln(A_1^j) +
                           \frac{R_1^j}{A_1^j}\right) \\
    &= \Big\{\rho_B \sigma_B^2 - 2\sigma_B \frac{R_0}{A_0}
      \e^{\sigma_B^2 -\rho_{RB} \sigma_B \sigma_R} (\rho_B \sigma_B - \rho_{RB}
      \sigma_R) \Big.\\ 
    & \Big.\phantom{=\,} + \frac{R_0^2}{A_0^2}
      \e^{2\sigma_B^2 - 2 \rho_{RB} \sigma_B \sigma_R} 
      \left(\e^{\rho_B \sigma_B^2 + \rho_R \sigma_R^2 -
      2\rho_{RB}\sigma_B\sigma_R} -1\right)\Big\} \,
      \left(\osigma^2\right)^{-1} \\
    \overline{\rho_{RB}}_{B,RE}
    &= \text{Corr}\left(\ln\left(\frac{A_1^i+R_1^i} {A_0^i+R_0^i}\right),
      \ln\left(\frac{R_1}{R_0}\right)\right) \\%
    &\approx \text{Corr}\left(\ln(A_1^i) + \frac{R_1^i}{A_1^i},
      \ln(R_1)\right)\\ %
    &= \left(\rho_{RB} \sigma_B +  \frac{R_0}{A_0}
      \e^{\sigma_B^2 - \rho_{RB} \sigma_B
      \sigma_R} (\rho_R \sigma_R - \rho_{RB}
      \sigma_B) \right) (\osigma)^{-1}. 
  \end{align*}
\end{proposition}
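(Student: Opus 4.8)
The plan is to treat all three identities as straightforward (if tedious) moment computations for lognormal random variables, once the right standardised Gaussian representations are in place. First I would fix the annual horizon ($t=1$) and write, for the bank, $\ln(A_1/A_0) = (r - \tfrac12\sigma_B^2) + \sigma_B Z_B$ with $Z_B\sim\Ncdf(0,1)$, and analogously $\ln(R_1/R_0) = (r-\tfrac12\sigma_R^2) + \sigma_R Z_R$ with $Z_R\sim\Ncdf(0,1)$, where $\text{Corr}(Z_B,Z_R)=\rho_{RB}$; for two banks $i,j$ the corresponding normals satisfy $\text{Corr}(Z_B^i,Z_B^j)=\rho_B$, $\text{Corr}(Z_R^i,Z_R^j)=\rho_R$, and $\text{Corr}(Z_B^i,Z_R^j)=\rho_{RB}$ (the cross-obligor RE--bank correlation). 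The key observation is that $R_1/A_1 = (R_0/A_0)\,\e^{\ln(R_1/R_0) - \ln(A_1/A_0)}$, so $R_1/A_1$ is itself lognormal: its exponent is normal with mean $(-\tfrac12\sigma_R^2 + \tfrac12\sigma_B^2)$ and variance $\sigma_B^2 + \sigma_R^2 - 2\rho_{RB}\sigma_B\sigma_R$. This is the one structural fact that drives everything.

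For the first identity, I would expand $\var(\ln A_1 + R_1/A_1) = \var(\ln A_1) + \var(R_1/A_1) + 2\,\cov(\ln A_1, R_1/A_1)$. The first term is $\sigma_B^2$. For the second, apply the lognormal variance formula $\var(\e^{N}) = \e^{2\mu+\sigma^2}(\e^{\sigma^2}-1)$ to $R_1/A_1 = (R_0/A_0)\e^{N}$ with $N$ as above; collecting the $\e^{2\mu}$ factor and using $2\mu = \sigma_B^2 - \sigma_R^2$ together with the variance $\sigma_B^2+\sigma_R^2-2\rho_{RB}\sigma_B\sigma_R$ gives, after the $\sigma_R^2$ terms cancel, exactly the stated $\frac{R_0^2}{A_0^2}\e^{2\sigma_B(\sigma_B-\rho_{RB}\sigma_R)}(\e^{\sigma_B^2+\sigma_R^2-2\rho_{RB}\sigma_B\sigma_R}-1)$. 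For the cross term I would use $\cov(\sigma_B Z_B, \e^{N})$ where $N$ is a linear combination of $Z_B,Z_R$; the identity $\cov(aX, \e^{Y}) = a\,\text{Cov}(X,Y)\,\E[\e^{Y}]$ for jointly Gaussian $(X,Y)$ (itself a one-line consequence of Stein's lemma, or of differentiating a joint MGF) yields $\sigma_B\,(\sigma_B - \rho_{RB}\sigma_R)\cdot(R_0/A_0)\e^{\mu+\tfrac12\sigma_N^2}$, and $\mu + \tfrac12\sigma_N^2 = \sigma_B^2 - \rho_{RB}\sigma_B\sigma_R$ reproduces the third summand (with its factor $-2$).

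The second identity is the same computation carried out for two distinct banks: expand $\cov(\ln A_1^i + R_1^i/A_1^i,\ \ln A_1^j + R_1^j/A_1^j)$ into four covariances. The $\ln$--$\ln$ term gives $\rho_B\sigma_B^2$; the two mixed $\ln$--$(R/A)$ terms each give a Stein-type contribution $-\sigma_B(R_0/A_0)\e^{\sigma_B^2-\rho_{RB}\sigma_B\sigma_R}(\rho_B\sigma_B - \rho_{RB}\sigma_R)$ (here the relevant covariance is $\text{Cov}(\sigma_B Z_B^j,\, N^i) = \rho_B\sigma_B^2 - \rho_{RB}\sigma_B\sigma_R$, using the cross-obligor correlations); and the $(R/A)$--$(R/A)$ term is a product of two lognormals, for which $\cov(\e^{N^i},\e^{N^j}) = \e^{\mu_i+\mu_j+\tfrac12(\sigma_{N^i}^2+\sigma_{N^j}^2)}(\e^{\text{Cov}(N^i,N^j)}-1)$ with $\text{Cov}(N^i,N^j) = \rho_B\sigma_B^2 + \rho_R\sigma_R^2 - 2\rho_{RB}\sigma_B\sigma_R$. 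Dividing by $\osigma^2$ (the common variance of each marginal, from the first identity) gives the claimed formula. The third identity is lighter: $\ln(R_1/R_0) = (r-\tfrac12\sigma_R^2)+\sigma_R Z_R$, so $\cov(\ln A_1^i + R_1^i/A_1^i,\ \ln R_1) = \cov(\sigma_B Z_B^i, \sigma_R Z_R) + \cov(\e^{N^i},\sigma_R Z_R)$; the first is $\rho_{RB}\sigma_B\sigma_R$, the second is another Stein term $(R_0/A_0)\e^{\sigma_B^2-\rho_{RB}\sigma_B\sigma_R}(\rho_R\sigma_R - \rho_{RB}\sigma_B)\,\sigma_R$, and normalising by $\osigma\cdot\sigma_R$ (noting $\var(\ln R_1) = \sigma_R^2$) produces the stated expression; I would also remark that replacing $\ln((A_1^i+R_1^i)/(A_0^i+R_0^i))$ by $\ln A_1^i + R_1^i/A_1^i$ and $\ln((A_1^j+R_1^j)/\cdots)$ likewise is precisely the approximation \eqref{eq:27}, so the ``$\approx$'' is inherited and nothing further is lost.

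The only real obstacle is bookkeeping: keeping the six Gaussian correlations straight, being consistent about which horizon ($t=1$) the variances refer to, and making sure the $r$- and $\sigma^2/2$-drift terms cancel correctly so that only the claimed exponents survive. I would isolate the computational engine as a single lemma — for jointly Gaussian $(X,Y,Y')$, the three facts $\E[\e^{Y}] = \e^{\E Y + \tfrac12\var Y}$, $\cov(X,\e^{Y}) = \text{Cov}(X,Y)\,\E[\e^{Y}]$, and $\cov(\e^{Y},\e^{Y'}) = \E[\e^Y]\E[\e^{Y'}](\e^{\text{Cov}(Y,Y')}-1)$ — and then apply it mechanically to each of the four covariance pieces in each identity, which reduces the proof to matching terms.
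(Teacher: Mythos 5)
Your proposal is correct and reaches all three formulas, but it takes a meaningfully different computational route from the paper. The paper first proves a small lemma for \emph{independent} standard normals ($\E[X\,\e^{\sigma_X X+\sigma_Y Y}]=\sigma_X\,\e^{(\sigma_X^2+\sigma_Y^2)/2}$) and then, for the cross-bank and bank--RE terms, passes the correlated vector $(W_1^{B,i},W_1^{B,j},W_1^{R,i},W_1^{R,j})$ through an explicit $4\times 4$ Cholesky factorisation, computing every covariance in terms of the entries $c_{kl}$ and only at the end simplifying back to $\rho_B,\rho_R,\rho_{RB}$. You instead work coordinate-free with the joint Gaussian structure, using the three invariant identities $\E[\e^Y]=\e^{\E Y+\frac12\text{Var}(Y)}$, $\text{Cov}(X,\e^Y)=\text{Cov}(X,Y)\,\E[\e^Y]$ (Stein's lemma) and $\text{Cov}(\e^Y,\e^{Y'})=\E[\e^Y]\,\E[\e^{Y'}]\bigl(\e^{\text{Cov}(Y,Y')}-1\bigr)$, so the correlations enter directly and the Cholesky bookkeeping disappears; this buys a shorter derivation in which the cancellation of the $r$- and drift terms is transparent, at the price of invoking Stein's lemma in place of the paper's elementary integral. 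Your structural identification of $R_1/A_1$ as lognormal with exponent mean $(\sigma_B^2-\sigma_R^2)/2$ and variance $\sigma_B^2+\sigma_R^2-2\rho_{RB}\sigma_B\sigma_R$, your correlation convention (own- and cross-obligor bank--RE correlation both $\rho_{RB}$, cross-obligor RE--RE correlation $\rho_R$, exactly as encoded in the paper's Cholesky matrix), and your normalisations by $\osigma^2$ and $\osigma\,\sigma_R$ all agree with the paper's proof. One small slip to repair: in the mixed term of the second identity you state $\text{Cov}(\sigma_B Z_B^j,N^i)=\rho_B\sigma_B^2-\rho_{RB}\sigma_B\sigma_R$, whereas with $N^i$ containing $\sigma_R Z_R^i-\sigma_B Z_B^i$ this covariance is $\rho_{RB}\sigma_B\sigma_R-\rho_B\sigma_B^2$; since the contribution you then write down carries the correct overall minus sign, the final formula is unaffected, but the intermediate line (and the analogous sign bookkeeping in the cross term of the first identity) should be made consistent in the write-up.
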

Because the proof consists mainly of long calculations it is deferred
to the appendix. Upon issuance of the RE loan, the bank's PD becomes
\begin{align}
  \p(\overline\tau\leq T)
  &= \p(A_T+R_T\leq \e^{rT}(D_0+R_0))\nonumber\\
  &= \p\left( \ln\left(\frac{A_T+R_T}{A_0+R_0}\right) \leq
    \ln\left(\frac{\e^{rT} (D_0+R_0)}{A_0+R_0}\right)\right) \nonumber\\
  &= \p(\bar Y\leq \bar c),\label{eq:19}
\end{align}
where $\overline Y\sim \Ncdf(0,1)$ and
$\bar c=\displaystyle\frac{\ln(\e^{rT} (D_0+R_0)/(A_0+R_0)) - rT +
  \overline\sigma^2 T/2}{\overline\sigma\, \sqrt{T}}$.

\section{Structuring the asset pool}

Two objectives for structuring the asset pool are important: First,
the asset pool needs to be appropriately diversified, as a
concentrated (i.e., undiversified) asset pool is not capable of
producing sufficient risk transfer between equity and senior
tranches. More specifically, given a target credit quality, the senior
tranche size varies depending on the degree of diversification in the
asset pool. Second, it can be assumed that investors seek exposure to
the RE sector as one of their primary reasons to invest. A typical
institutional investor will therefore find a senior AAA-rated tranche
with a high sensitivity to the RE sector most attractive. Based on
these considerations, we determine the optimal mix of (diversified)
indirect RE loans via banks and direct RE loans in the asset pool.

First, we introduce PV01 and tranche delta to measure the exposure of
an tranche to RE loans. Second, we specify and solve the optimisation
problem to design a structure according to the above-mentioned
criteria.

\subsection{CDO sensitivities}
\label{sec:cdo-sensitivities}

We measure the exposure to RE by the sensitivity of tranche values to
changes in RE loan value changes.  A CDO tranche's {\em PV01 (present
  value of a basis point)} is the change in tranche value following a
one basis point spread widening of the underlying portfolio. The {\em
  (tranche) delta\/} of a CDO tranche is the PV01 relative to the PV01
of the reference portfolio (e.g.\ Chapter 17 \citealp{OKane2008}). The
tranche delta expresses the proportion of the asset pool required to
hedge against changes in the tranche value.

As we are interested in sensitivities with respect to RE, we introduce
the {\em $\text{PV01}_{RE}$} as the value change in a CDO tranche when
the credit spreads of all RE loans (both direct loans and indirect
through bank loans) increase by one basis point. In Section
\ref{sec:cdo-valuation}, we denoted the value of a CDO tranche by
$V(\alpha_{i-1}, \alpha_i, s_i)$. Since we are only considering the
most senior tranche, and need notation for the specific setting, we
denote the tranche value by $V(\lambda, w, \alpha, s)$, where
$\lambda$ denotes the RE loan hazard rate, $w$ denotes the percentage
weight of direct RE loans in the asset pool, $\alpha$ is the senior
tranche's attachment point and $s$ is the credit spread paid on the
tranche. The number of direct RE loans is assumed to be
constant. Using \eqref{eq:18}, a 1 basis point change in the RE loan
spread translates into
$\tilde\lambda = \displaystyle \frac{s_{\text{loan}} + 0.0001}{1-R}$,
giving a sensitivity of
\begin{equation*}
  \text{PV01}_{RE} = V(\tilde\lambda, w, \alpha, s)-V(\lambda,
  w,\alpha, s)
\end{equation*}
and a tranche delta of
\begin{equation*}
  \Delta_{RE} = \frac{\text{PV01}_{RE}}
  {\text{PV01}_{RE,\text{loan}}}, 
\end{equation*}
where $\text{PV01}_{RE,\text{loan}}$ is the PV01 of a single RE loan,
determined from \eqref{eq:7} with $\tilde\lambda$ and $\lambda$,
respectively. 

The value of the direct RE loans in the LH++ model is calculated
directly from \eqref{eq:7}. For the RE loans on the banks' balance
sheets, the new PD is calibrated to the Merton model, \eqref{eq:8}, from
which the new asset volatility of an RE loan is backed out, which in
turn is used to calculate the new PD of the bank portfolio,
\eqref{eq:19}. This is the input to calculating the value of each loan
to a bank, \eqref{eq:20}. For the $\text{PV01}_{RE}$, the previously
calculated quantities enter in the calculation of the tranche survival
curve, Equation (\ref{eq:5}), which in turn enters the tranche
valuation, \eqref{eq:21}.

\subsection{Optimal senior tranche size and RE loan weight}
\label{sec:optim-seni-tranche}

From the structurer's point of view, the objective is to generate a
senior tranche with maximum exposure to RE loans given a desired
minimum size tranche size and a desired credit rating, typically
AAA. The credit rating constraint can be formulated in terms of the
default probability or the expected loss of a AAA-rated loan, cf.\
\citep{Hull2010}. In the first case, one would require
$\p(L\geq \alpha_{i-1}) \leq \pi_{\text{AAA}}$, with
$\pi_{\text{AAA}}$ the PD of AAA-rated loan. In the second case, one
would set $1-q_i(T) \leq \pi_{\text{AAA}} (1-R)$, where $1-q_i(T)$ is
the expected loss as a percentage of the senior tranche's notional,
and $\pi_{\text{AAA}}(1-R)$ is the expected loss of a AAA-rated
loan.\footnote{%
  It should be noted that, since the model is defined under the
  risk-neutral measure, the hitting probability and expected
  percentage loss notional are implied quantities and do not necessary
  coincide with real-world quantities.  } %
In the following, we take expected loss as the constraint.

Let $w\in [0,1]$ be the percentage weight of the direct RE loan
sub-portfolio in the asset pool (i.e., every RE loan has weight
$w/n$). The number $n$ of direct RE loans is assumed to be given --
obviously, at a fixed $w$, a higher $n$ adds diversification, so an
infinitely granular RE sub-portfolio is optimal, but infeasible. Also,
we take the size of RE loans on intermediate banks' balance sheets as
given, as this is a variable that is not controlled by the issuer. The
objective for structuring the asset pool is formulated as the weight
of direct RE loans that maximises exposure to RE, expressed as
$\text{PV01}_{RE}$, while allowing for sufficient diversification in
the asset pool, formulated via a minimum senior tranche size
$\alpha_{\min}$:
\begin{gather}
  \max_{w,\alpha\in[0,1]} |\text{PV01}_{RE}|,\label{eq:31}\\
  \text{subject to \ \ \ \ \ \ \ \ \ \ \ \ \ \ \ \ \ \ \ \ \ \ \ \ \ \
    \
    \ \ }\nonumber\\
  1-q(T,\lambda,w,\alpha) \leq \pi_{\text{AAA}} (1-R),\label{eq:34}\\
  \alpha \leq \alpha_{\max}.\label{eq:35}
\end{gather}
Here, $q(t,\lambda, w,\alpha)$ denotes the expected percentage tranche
notional at time $t\in [0,T]$, cf.\ \eqref{eq:5}.  The constraint
\eqref{eq:35} expresses that the optimal attachment point for $\alpha$
must obey a minimum tranche size, expressed by the maximum attachment
point $\alpha_{\max}$, specified by the issuer. A solution may fail to
exist, if $\alpha_{\max}$ is chosen too small (just consider the case
where $\alpha_{\max}=0$, which is incompatible with the requirement
that the senior tranche attains a AAA rating unless all loans in the
asset pool are AAA-rated).  The following proposition characterises
the solution if it exists.
\begin{proposition}~
  \label{prop:optimal}
  \begin{enumerate}[(i)]
  \item For $w\in (0,1)$, the $\text{PV01}_{RE}$ and the attachment
    point $\alpha\in [0,1]$ satisfy
    \begin{align}
      \text{PV01}_{RE}&<0\label{eq:30}\\[7pt]
      \frac{\partial}{\partial\alpha} \text{PV01}_{RE}&>0. \label{eq:10}
    \end{align}
    As a consequence, if a solution exists (i.e., \eqref{eq:34} and
    \eqref{eq:35} are satisfied), then it~\eqref{eq:34} is binding,
    giving
    $\alpha^\ast(w) = \argmin_{\alpha} q(T,\lambda,w,\alpha) =
    1-\pi_{\text{AAA}} (1-R)$, for $w\in [0,1]$.
  \item For $w\in (0,1)$,
    \begin{equation}
      \label{eq:38}
      \frac{\partial}{\partial w}  \text{PV01}_{RE} <0.
    \end{equation}
    If a solution exists and $w^\ast\in (0,1)$, then
    $\displaystyle \frac{\partial}{\partial w} \alpha^\ast(w^\ast)>0$
    and, as a consequence, \eqref{eq:35} is binding, i.e.,
    $\alpha^\ast(w^\ast) = \alpha_{\max}$. Otherwise, if a solution
    exists, then $w^\ast=1$.
  \end{enumerate}
\end{proposition}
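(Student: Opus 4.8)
The plan is to establish the two sign statements \eqref{eq:30} and \eqref{eq:38}, together with the monotonicity \eqref{eq:10}, by reasoning about how the tranche value $V(\lambda,w,\alpha,s)$ and the tranche survival curve $q(t,\lambda,w,\alpha)$ respond to the three perturbations involved. First I would record the elementary fact that a spread widening on the RE loans (equivalently, increasing the RE hazard rate from $\lambda$ to $\tilde\lambda$) raises every relevant default probability: the direct RE loans' PDs increase directly through \eqref{eq:24}, and, via the Merton calibration in Section~\ref{sec:indirect-re-loans} (the chain \eqref{eq:8}$\to$\eqref{eq:19}), the intermediary banks' PDs increase as well. By Proposition~\ref{prop:exploss} and \eqref{eq:5}, $q_i(t)=1-\E[L_{i,t}]/(\alpha_i-\alpha_{i-1})$ is decreasing in each obligor's PD, since $\E[L_t\wedge\alpha_i]-\E[L_t\wedge\alpha_{i-1}]$ is increasing when the loss variable $L_t$ increases stochastically. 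Hence $q_i(\cdot)$ shifts down pointwise under the spread widening, and from \eqref{eq:21} the tranche value $V(\tilde\lambda,w,\alpha,s)-V(\lambda,w,\alpha,s)=\int_0^T e^{-ru}(q_i^{\tilde\lambda}(u)-q_i^\lambda(u))\,\dd u + e^{-rT}(q_i^{\tilde\lambda}(T)-q_i^\lambda(T))<0$, provided $w\in(0,1)$ so that the RE sub-portfolio genuinely contributes to $L_t$. This gives \eqref{eq:30}; note it relies on $w>0$ (otherwise there are no direct RE loans and the effect comes only through the banks) and $w<1$ (otherwise the homogeneous bank portfolio vanishes) — but in fact as long as \emph{some} RE exposure is present the strict inequality holds, which is what is claimed.

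Next, for \eqref{eq:10}, I would argue that raising the attachment point $\alpha$ of the senior tranche makes the tranche more remote from losses, so the marginal effect of a given spread widening on it is smaller in absolute value; since $\text{PV01}_{RE}<0$, a smaller magnitude means $\text{PV01}_{RE}$ increases, i.e.\ $\partial_\alpha\text{PV01}_{RE}>0$. Concretely, writing $\text{PV01}_{RE}$ as $\int_0^T e^{-ru}\,\big(q^{\tilde\lambda}_\alpha(u)-q^\lambda_\alpha(u)\big)\dd u + e^{-rT}(\cdots)$ with $q_\alpha(u)=1-(\E[L_u\wedge 1]-\E[L_u\wedge\alpha])/(1-\alpha)$ for a top tranche detaching at $1$, one differentiates in $\alpha$ and shows the integrand's $\alpha$-derivative is positive; this amounts to checking that the \emph{difference} $\E[L_u\wedge 1]-\E[L_u\wedge\alpha]$, normalised by $1-\alpha$, loses sensitivity to the RE PDs as $\alpha\uparrow$, which follows because $\p(L_u>\alpha)$ — the hitting probability that governs how much RE defaults "reach" the tranche — is decreasing in $\alpha$ (Proposition~\ref{prop:lossproba}). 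From \eqref{eq:10} and $\text{PV01}_{RE}<0$ one gets $\partial_\alpha|\text{PV01}_{RE}|<0$, so the objective \eqref{eq:31} is maximised by taking $\alpha$ as small as the constraints permit; since $q(T,\lambda,w,\alpha)$ is increasing in $\alpha$, the AAA constraint \eqref{eq:34} is slackened by larger $\alpha$ and tightened by smaller $\alpha$, so at the optimum \eqref{eq:34} binds, yielding $\alpha^\ast(w)=\argmin_\alpha q(T,\lambda,w,\alpha)$ subject to equality, i.e.\ $1-q(T,\lambda,w,\alpha^\ast(w))=\pi_{\text{AAA}}(1-R)$, which is the displayed characterisation.

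For part (ii), \eqref{eq:38}, the plan is a substitution/portfolio-composition argument: increasing $w$ replaces a slice of the infinitely granular, well-diversified bank portfolio with additional lumpy direct RE loans. Two effects combine. The diversification effect: with the RE loans being only finitely many and positively correlated at level $\rho_0$, shifting weight to them fattens the upper tail of $L_t$ for a fixed spread level, pushing $q_i(\cdot)$ down — but more importantly, it \emph{increases} the portfolio's responsiveness to an RE spread widening, because a larger $w$ means a larger fraction of $L_t$ moves when the RE PDs move. Writing $L_t = L_t^{\text{bank}}(1-w\text{-part}) + L_t^{\text{RE}}(w\text{-part})$ schematically and differentiating $\text{PV01}_{RE}$ in $w$, the dominant term is the increase in $|\partial_{\lambda}\E[L_t\wedge\alpha_i]|$ coming from the larger RE notional; since $\text{PV01}_{RE}<0$, this makes it more negative, giving \eqref{eq:38}. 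Then, because $q(T,\lambda,w,\alpha)$ is decreasing in $w$ (more concentration, heavier losses), the minimal $\alpha$ satisfying \eqref{eq:34} with equality must rise with $w$, i.e.\ $\partial_w\alpha^\ast(w)>0$; combined with \eqref{eq:35} this forces $\alpha^\ast(w^\ast)=\alpha_{\max}$ at an interior optimum, and if \eqref{eq:35} never binds (the tranche-size cap is generous) then $|\text{PV01}_{RE}|$ being decreasing in nothing that stops $w$ — i.e.\ the objective is increasing in $w$ with no active upper constraint on $w$ beyond feasibility — pushes $w^\ast=1$. The main obstacle I anticipate is making the $w$-derivative sign in \eqref{eq:38} rigorous: one must disentangle the "more RE notional responds" effect from the countervailing change in the conditional loss-distribution shape (the $A_t(\alpha,k)$ thresholds in Proposition~\ref{prop:exploss} depend on $N_0=w/n$ and $N=1-w$), and show the former dominates for all $w\in(0,1)$; I would handle this by differentiating the explicit expression for $\E[L_t\wedge\alpha]$ term-by-term in $w$ (with $\lambda$ perturbed), grouping the $\binom{n}{k}$ sum, and bounding the residual, deferring the bulk of that computation to an appendix as was done for Proposition~\ref{prop:3}.
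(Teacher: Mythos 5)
Your intuitions for \eqref{eq:30} and for the qualitative direction of \eqref{eq:10} and \eqref{eq:38} match the paper's, but you miss the two structural devices that the paper's proof actually runs on, and this leaves genuine gaps. First, the paper exploits the par-pricing identities $V(\lambda,w,\alpha,s(\alpha))=1$ and $V(\lambda,w,\alpha^\ast(w),s(w,\alpha^\ast(w)))=1$: differentiating these in $\alpha$ (resp.\ $w$) gives a zero total derivative, so the corresponding derivative of $\text{PV01}_{RE}$ collapses to the derivative of the perturbed value $V(\tilde\lambda,\cdot)$ alone, whose sign is then read off from the first- and mixed-partial inequalities \eqref{eq:42}--\eqref{eq:40} of Lemma \ref{lemma:partial}. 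You instead differentiate the difference $V(\tilde\lambda)-V(\lambda)$ holding $s$ (and, for \eqref{eq:38}, $\alpha$) fixed. For \eqref{eq:10} this drops the term $s'(\alpha)\,\partial_s V$, and your remaining claim --- that the normalised quantity $\bigl(\E[L_u\wedge 1]-\E[L_u\wedge\alpha]\bigr)/(1-\alpha)$ ``loses sensitivity'' to the RE hazard as $\alpha$ increases --- does not follow from monotonicity of $\p(L_u>\alpha)$ in $\alpha$: the $1/(1-\alpha)$ normalisation works against you, and the sign of $\partial_\alpha\partial_\lambda q$ is exactly what has to be established, not assumed. For \eqref{eq:38} the gap is more serious: the derivative the optimisation needs (and the one the paper computes via \eqref{eq:36}) is the total derivative along the binding AAA constraint, i.e.\ of $\text{PV01}_{RE}(\lambda,w,\alpha^\ast(w),s(w,\alpha^\ast(w)))$, which includes the feedback through $\partial\alpha^\ast/\partial w$ and the repricing of $s$; your fixed-$(\alpha,s)$ ``more RE notional responds'' argument addresses only the direct term and cannot by itself justify pushing $w$ upward in the constrained problem.

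Second, your assertion that $q(T,\lambda,w,\alpha)$ is decreasing in $w$ (hence $\partial_w\alpha^\ast>0$) is stated unconditionally, but it is false in general, and the proposition's dichotomy hinges on precisely this point. The paper applies the Implicit Function Theorem to the binding constraint to get $\partial_w\alpha^\ast = -\partial_w q/\partial_\alpha q$ and then distinguishes the cases where $q$ is increasing, decreasing or concave in $w$: adding direct RE loans can \emph{improve} pool credit quality (by diversification, or because the RE PD is low relative to the bank PD), in which case $\partial_w\alpha^\ast<0$ and $w^\ast=1$ follows immediately; only in the decreasing/concave case does \eqref{eq:35} bind at an interior $w^\ast$. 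By collapsing this case analysis you can reach the ``otherwise $w^\ast=1$'' branch only by hand-waving, whereas in the paper it falls out of the sign of $\partial_w q$.
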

The following Lemma contains some properties that are required for the
proof. 
\begin{lemma}
  \label{lemma:partial}
  Let $V(\lambda, w, \alpha, s)$ denote the value of the most senior
  CDO tranche with attachment point $\alpha$, spread $s$, RE loan
  weight $w$ and RE loan intensity $\lambda$. Then, the following
  properties hold:
  \begin{gather}
    \label{eq:42}
    \frac{\partial}{\partial \lambda} V(\lambda, w, \alpha, s)<0\\
    \label{eq:37}
    \frac{\partial}{\partial w} \frac{\partial}{\partial \lambda} %
    V(\lambda, w, \alpha, s) <0\\
    \label{eq:39}
    \frac{\partial}{\partial s} V(\lambda, w, \alpha, s)>0, \quad
    \frac{\partial}{\partial s} \frac{\partial}{\partial \lambda} %
    V(\lambda, w, \alpha, s) <0\\
    \label{eq:40}
    \frac{\partial}{\partial \alpha} V(\lambda, w, \alpha, s)>0, \quad
    \frac{\partial}{\partial \alpha} \frac{\partial}{\partial
      \lambda} %
    V(\lambda, w, \alpha, s)<0
  \end{gather}
\end{lemma}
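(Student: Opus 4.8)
Here is how I would go about proving Lemma~\ref{lemma:partial}.

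\smallskip
\noindent\emph{Reduction.} The plan is to reduce all six inequalities to statements about the most senior tranche's survival curve $q(t,\alpha):=1-\E[(L_t-\alpha)^{+}]/(1-\alpha)$, where $L_t$ is given by \eqref{eq:28} and carries the $(\lambda,w)$-dependence. By \eqref{eq:21} (with detachment point $1$), $V(\lambda,w,\alpha,s)=(r+s)\int_0^T\e^{-ru}q(u,\alpha)\,\dd u+\e^{-rT}q(T,\alpha)$, so, since $r+s>0$ and $\e^{-ru}>0$, the signs of $\partial_\lambda V$, $\partial_\alpha V$, $\partial_w\partial_\lambda V$ and $\partial_\alpha\partial_\lambda V$ are inherited, uniformly in the time argument, from those of $\partial_\lambda q$, $\partial_\alpha q$, $\partial_w\partial_\lambda q$ and $\partial_\alpha\partial_\lambda q$; moreover $\partial_s V=\int_0^T\e^{-ru}q(u,\alpha)\,\dd u>0$ since $q>0$, and $\partial_s\partial_\lambda V=\int_0^T\e^{-ru}\,\partial_\lambda q(u,\alpha)\,\dd u$. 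Finally, since $L_t\le1$, the layer-cake identity gives $\E[(L_t-\alpha)^{+}]=\int_\alpha^1\p(L_t>x)\,\dd x$, hence
\[
  q(t,\alpha)=1-\frac{1}{1-\alpha}\int_\alpha^1\p(L_t>x)\,\dd x ,
\]
and everything reduces to the loss tail $x\mapsto\p(L_t>x)$ and its partials in $(\lambda,w)$.

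\smallskip
\noindent\emph{Monotonicity in $\lambda$ ($\Rightarrow$ \eqref{eq:42} and the second half of \eqref{eq:39}).} First I would show $\p(L_t>x)$ is nondecreasing in $\lambda$ for every $x$, by a monotone coupling. Realise the obligors on one probability space via the one-factor structure of Proposition~\ref{prop:lossproba}: direct RE loans default on $\{X_{k,t}\le c_{0,t}\}$ with $c_{0,t}=\Ncdf^{(-1)}(1-\e^{-\lambda t})$, and the granular part contributes $\Ncdf\bigl((c_t-\sqrt{\rho}V_t)/\sqrt{1-\rho}\bigr)$ as in \eqref{eq:28}. Raising $\lambda$ raises $c_{0,t}$, so more RE loans default along each path; and, through the channel of Section~\ref{sec:indirect-re-loans}, a riskier RE loan (larger $p_R$) backs out of the Merton model as a larger $\sigma_R$, hence — by Proposition~\ref{prop:3} and the fact that the threshold $\bar c$ from \eqref{eq:19} is increasing in $\overline\sigma$ when $D_0+R_0<A_0+R_0$ — as a larger bank PD $p_t$, hence a larger $c_t$ (this monotone chain uses that $R_0$ is small relative to the bank's assets, the regime in which \eqref{eq:27} was derived). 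Thus $L_t$ is pathwise nondecreasing in $\lambda$, so $\p(L_t>x)$ is nondecreasing in $\lambda$ (strictly for $x$ interior to the loss support), $q$ is strictly decreasing in $\lambda$, and $\partial_\lambda V<0$: this is \eqref{eq:42}, and the second inequality in \eqref{eq:39} follows from $\partial_s\partial_\lambda V=\int_0^T\e^{-ru}\,\partial_\lambda q(u,\alpha)\,\dd u<0$.

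\smallskip
\noindent\emph{The $\alpha$-derivatives ($\Rightarrow$ \eqref{eq:40} and setup of the mixed derivatives).} From $\partial_\alpha\E[(L_t-\alpha)^{+}]=-\p(L_t>\alpha)$ and $L_t\le1$ one gets $\partial_\alpha q(t,\alpha)=(1-\alpha)^{-2}\E[(1-L_t)\1_{\{L_t>\alpha\}}]\ge0$, strictly unless $L_t=1$ a.s.\ on $\{L_t>\alpha\}$; hence $\partial_\alpha V>0$, the first inequality in \eqref{eq:40}. For the two mixed derivatives, put $f_t(x):=\partial_\lambda\p(L_t>x)\ge0$ (from the previous step) and differentiate the displayed expression for $q$:
\[
  \partial_\alpha\partial_\lambda q(t,\alpha)=\frac{1}{1-\alpha}\Bigl(f_t(\alpha)-\frac{1}{1-\alpha}\int_\alpha^1 f_t(x)\,\dd x\Bigr),
  \qquad
  \partial_w\partial_\lambda q(t,\alpha)=-\frac{1}{1-\alpha}\int_\alpha^1\partial_w f_t(x)\,\dd x .
\]
So the second inequality in \eqref{eq:40} reduces to showing $f_t$ is nondecreasing on $[\alpha,1]$ (then $f_t(\alpha)$ lies below its average over $[\alpha,1]$, so $\partial_\alpha\partial_\lambda q<0$ and $\partial_\alpha\partial_\lambda V<0$), and \eqref{eq:37} reduces to showing $\partial_w f_t(x)=\partial_w\partial_\lambda\p(L_t>x)>0$ on $[\alpha,1]$.

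\smallskip
\noindent\emph{The two tail facts and the main obstacle.} Both I would extract from the closed form of $\p(L_t>x)$ in Proposition~\ref{prop:lossproba}, conditioned on $V_t$,
\[
  \p(L_t>x)=\sum_{k=0}^n\binom{n}{k}\int_{-\infty}^{A_t(x,k)}P(v)^{k}\bigl(1-P(v)\bigr)^{n-k}\,\n(v)\,\dd v ,\qquad
  P(v)=\Ncdf\!\Bigl(\tfrac{c_{0,t}-\sqrt{\rho_0}\,v}{\sqrt{1-\rho_0}}\Bigr),
\]
with $A_t(x,k)$ as in \eqref{eq:1} (decreasing and convex in $x$ on the range where the clipped argument of $\Ncdf^{(-1)}$ lies in $(0,1)$), by differentiating under the integral and tracking the signs of the Gaussian densities, of the binomial weights, and of their first differences. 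Monotonicity of $f_t$ in $x$ follows because $x$ enters only through $A_t(x,k)$, and both $A_t(x,k)$ and $\partial_x A_t(x,k)$ are monotone on $[\alpha,1]$, combined with log-concavity (total positivity) of the normal kernel. The positivity of $\partial_w f_t$ is the genuine difficulty: increasing $w$ scales up the direct-RE part of $L_t$ (through $N_0=w/n$ in $A_t(x,k)$), which is strongly $\lambda$-sensitive because $p_{0,t}=1-\e^{-\lambda t}$ enters it directly, while it scales down the granular part (through $N=1-w$), which is only weakly $\lambda$-sensitive because the RE loan reaches it through the first-order balance-sheet channel \eqref{eq:27} — the up- and down-scaling have opposite signs. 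I would resolve this by making ``weakly $\lambda$-sensitive'' quantitative: from Proposition~\ref{prop:3}, $\partial_{\sigma_R}\overline\sigma=O(R_0/A_0)$, so $\partial_\lambda p_t=O(R_0/A_0)$, which is dominated by the $O(1)$ direct sensitivity $\partial_\lambda p_{0,t}=t\,\e^{-\lambda t}$; hence, in the $R_0/A_0$-range in which the LH++ model is set up, the positive (direct) contribution to $\partial_w\partial_\lambda\p(L_t>x)$ dominates the negative (bank) one on $[\alpha,1]$. An alternative in the AAA regime is to note that $\{L_t>\alpha\}$ with $\alpha>\max\{w(1-R_0),(1-w)(1-R)\}$ forces $k$ close to $n$, which pins down the sign of the binomial first differences entering $\partial_w f_t$.
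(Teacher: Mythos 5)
Your route is genuinely different from the paper's: the paper settles the lemma in a few lines of qualitative no-arbitrage reasoning (a senior tranche is long the RE loans; $s$ enters only as a cash flow; credit quality increases in $\alpha$), whereas you push everything through \eqref{eq:21} and the layer-cake identity down to the loss tail $x\mapsto\p(L_t>x)$. That reduction, the monotone coupling giving \eqref{eq:42}, the identity $\partial_\alpha q(t,\alpha)=(1-\alpha)^{-2}\E[(1-L_t)\1_{\{L_t>\alpha\}}]\ge 0$, and the deductions of $\partial_s V>0$, $\partial_s\partial_\lambda V<0$ and the first half of \eqref{eq:40} are sound (modulo checking strictness and verifying, rather than asserting, that $\osigma$ in Proposition \ref{prop:3} and hence the bank threshold in \eqref{eq:19} are increasing along the $\lambda\to p_{0,t}\to\sigma_R$ chain).

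The genuine gaps are the two remaining cross-derivatives. For the second inequality in \eqref{eq:40} you reduce to ``$f_t(x)=\partial_\lambda\p(L_t>x)$ is nondecreasing on $[\alpha,1]$'', but this claim fails precisely in the regime where the lemma is used: the maximal loss is $nN_0(1-R_0)+N(1-R)<1$, so $f_t\equiv 0$ near $1$, and already in the pure LHP limit $f_t(x)=\mathrm{const}\cdot\n\bigl(A_t(x,0)\bigr)$ with $A_t(\cdot,0)$ decreasing; for a senior attachment point the hitting probability $\Ncdf(A_t(\alpha,0))$ is small, so $A_t(x,0)<A_t(\alpha,0)<0$ on $[\alpha,1]$ and $f_t$ is \emph{decreasing} there. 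Your sufficient condition then points the wrong way, so the sign of $\partial_\alpha\partial_\lambda q$ does not follow from your argument and needs a different (and more delicate) treatment — this is exactly the step the paper's phrase ``the impact is smaller when $\lambda$ increases'' glosses over, and my boundary computation suggests it cannot be fixed by a generic monotonicity argument. For \eqref{eq:37} you concede the difficulty and resolve it by an order-of-magnitude comparison, $\partial_\lambda p_t=O(R_0/A_0)$ versus $\partial_\lambda p_{0,t}=O(1)$; that is an asymptotic dominance heuristic valid only for small $R_0/A_0$, with no quantitative bound on the competing (negative) granular term whose weight $N=1-w$ is the dominant share of the pool, so the strict inequality as stated is not established. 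In sum, your proposal proves \eqref{eq:42}, \eqref{eq:39} and the first half of \eqref{eq:40} more rigorously than the paper does, but \eqref{eq:37} and the second half of \eqref{eq:40} remain open in your argument.
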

\begin{proof}
  The properties all follow from no-arbitrage arguments. \eqref{eq:42}
  and \eqref{eq:37} are a direct consequence of a senior CDO tranche
  being a long position in RE loans. For \eqref{eq:39}, inspection of
  the tranche valuation formula shows that $s$ enters only as a cash
  flow, while $\lambda$ affects the expected percentage tranche
  notional $q$, which is lower for higher $\lambda$, hence eliminating
  some of the positive effect of the spread change $s$. Finally, for
  \eqref{eq:40}, the tranche's credit quality increases with $\alpha$,
  but the impact is smaller when $\lambda$ increases.
\end{proof}

\begin{proof}~ (i) \eqref{eq:30} follows directly from \eqref{eq:42}.
  For \eqref{eq:10}, observe first that
  \begin{equation}
    \label{eq:33}
    \frac{\partial}{\partial \alpha} V(\lambda,w,\alpha, s(\alpha)) =
    0,
  \end{equation}
  with $s(\alpha)$ the fair spread for attachment point
  $\alpha$. Because the credit quality increases with a higher
  attachment point,
  $\displaystyle\frac{\partial}{\partial \alpha} s(\alpha)<0$, it
  follows from \eqref{eq:33} that
  $\displaystyle \frac{\partial}{\partial \alpha} q(u,\lambda,
  w,\alpha)>0$, for all $u\in [0,T]$ (that this holds for all
  $u\in [0,T]$ follows from the monotonicity of $q$ in $u$). Because
  $q(u,\lambda,w,\alpha)$ does not depend on $s$, this holds for
  $\tilde\lambda$ as well:
  $\displaystyle\frac{\partial} {\partial \alpha}
  q(u,\tilde\lambda,w,\alpha)>0$. It follows that
  \begin{equation*}
    \frac{\partial}{\partial \alpha} \text{PV01}_{RE} =
    \frac{\partial} {\partial \alpha} V(\tilde\lambda, w, \alpha,
    s(\alpha)) >0. 
  \end{equation*}
  This proves \eqref{eq:30} and \eqref{eq:10}. It follows jointly from
  \eqref{eq:30} and \eqref{eq:10} that a lower attachment point
  creates the greater exposure (sensitivity). Hence, for given $w$,
  the optimal attachment point is as small as possible. By the rating
  constraint \eqref{eq:34}, a target credit quality requires a minimum
  attachment point, which determines $\alpha^\ast(w)$,
  $w\in [0,1]$.\medskip

  (ii) With the binding constraint \eqref{eq:34}, the optimisation
  problem is re-formulated as
  \begin{gather*}
    \max_w \text{PV01}_{RE} (\lambda, w, \alpha^\ast(w),
    s(w,\lambda^\ast(w))),\\
    \text{such that } \lambda^\ast(w)\leq \alpha_{\max}x.
  \end{gather*}

  Because $V(\lambda, w, \alpha^\ast(w), s(w,\alpha^\ast))=1$, for all
  $w\in [0,1]$, it holds that
  \begin{align}
    \label{eq:36}
    \frac{\partial}{\partial w} V(\lambda, w, \alpha^\ast(w),
    s(w,\alpha^\ast(w)) %
    &= \frac{\partial}{\partial w} V(\lambda, w, \alpha,
      s) %
      + \frac{\partial}{\partial s} V(\lambda, w, \alpha, s)
      \left(\frac{\partial s}{\partial w} + \frac{\partial
      s}{\partial \alpha} \cdot \frac{\partial \alpha}{\partial
      w}\right)\\ %
    &\phantom{=\,} \nonumber%
      + \frac{\partial}{\partial \alpha} V(\lambda, w,
      \alpha, s) \cdot \frac{\partial \alpha}
      {\partial w} = 0,
  \end{align}
  where $s=s(w,\alpha^\ast(w))$ and $\alpha=\alpha^\ast(w)$.  It
  therefore suffices to consider
  \begin{align*}
    \frac{\partial}{\partial w} V(\tilde \lambda, w, \alpha^\ast(w),
    s(w,\alpha^\ast(w)) %
    &= \frac{\partial}{\partial w} V(\tilde\lambda, w, \alpha,
      s) %
      + \frac{\partial}{\partial s} V(\tilde\lambda, w, \alpha, s)
      \left(\frac{\partial s}{\partial w} + \frac{\partial
      s}{\partial \alpha} \cdot \frac{\partial \alpha}{\partial
      w}\right)\\ %
    &\phantom{=\,} \nonumber%
      + \frac{\partial}{\partial \alpha} V(\tilde\lambda, w,
      \alpha, s) \cdot \frac{\partial \alpha}
      {\partial w}. 
  \end{align*}
  Observing that
  $\displaystyle \frac{\partial s}{\partial w}, \frac{\partial
    s}{\partial \alpha}, \frac{\partial \alpha}{\partial w}$ do not
  depend on $\lambda$, the properties \eqref{eq:37}, \eqref{eq:39} and
  \eqref{eq:40} from Lemma \ref{lemma:partial} imply
  $\displaystyle\frac{\partial}{\partial w} V(\tilde\lambda, w,
  \alpha^\ast(w), s(w,\alpha^\ast(w)) <0$, which in turn establishes
  \eqref{eq:38}. \medskip

  For the second part, because \eqref{eq:34} is binding, it follows
  that $q(T,\lambda, w, \alpha^\ast(w))$, is constant for all $w>0$,
  hence by the Implicit Function Theorem
  \begin{equation}
    \label{eq:41}
    \frac{\partial \alpha^\star(w)}{\partial w} =
    -\frac{\frac{\partial}{\partial w} q(T,\lambda, w,\alpha)}
    {\frac{\partial}{\partial \alpha} q(T,\lambda, w, \alpha)},
  \end{equation}
  at $\alpha=\alpha^\ast(w)$.  In part (i) it was established that
  $\displaystyle\frac{\partial}{\partial \alpha}
  q(T,\lambda,w,\alpha)>0$.  It remains to analyse
  $\displaystyle\frac{\partial}{\partial w} q(T,\lambda,w,
  \alpha)$. Increasing the weight $w$ of RE loans in the asset pool
  can increase credit quality e.g.\ by diversification or decrease
  credit quality, e.g.\ by concentration in the asset pool, which in
  turn affects the expected percentage tranche notional $q$. If
  increasing $w$ increases the asset pool credit quality, either by
  diversification or because the RE loan PD is small compared to the
  bank loan PD, then $q$ increases. Vice versa, if increasing $w$
  decreases the asset pool credit quality, either by concentration or
  because the RE loan PD is high compared to the bank loan PD, then
  $q$ decreases. Aside from $q$ being monotone (increasing /
  decreasing) in $w$, the only other possible case is that $q$ is
  concave, i.e., small $w$ diversifies, high $w$ concentrates.

  If $q$ is monotone decreasing or concave,
  $\displaystyle \frac{\partial}{\partial w} \alpha^\ast(w)>0$ by
  \eqref{eq:41}, which implies that a higher attachment point $\alpha$
  creates a higher sensitivity \eqref{eq:38} (in magnitude), and the
  attachment point is constrained by the tranche size requirement
  \eqref{eq:35}, giving an inner solution $w^\ast\in (0,1)$, or by
  $w^\ast=1$. Similarly, If $q$ is monotone increasing in $w$, then
  $\displaystyle\frac{\partial}{\partial w} \alpha^\ast(w)<0$,
  implying $w^\ast=1$ since $\alpha^\ast(1)\leq \alpha_{\max}$.
\end{proof}

\subsection{Example}
\label{sec:example}

For a realistic analysis, the example considered uses publicly
available market data as well as size specifications of the GCPF. All
data are specified in Table \ref{tab:setup}. The $\text{PV01}$ of a
10-year RE loan priced at par is determined to be $-8.7281$ basis
points by calculating a new hazard rate
$\displaystyle\tilde\lambda = \frac{s+0.0001}{1-R}$ from (\ref{eq:18})
and plugging this into (\ref{eq:7}).


%
\begin{table}[t]
  \centering
  \begin{tabular}{p{6.5cm}p{2cm}p{6.5cm}}
    \hline\hline
    Variable & Value & Comment\\\hline
    CDO maturity & 10 years & \scriptsize{Average duration in GCPF around
                              10 years.} \\  
    Median bank rating & B+ & \scriptsize{Median of 30 banks in
                              GCPF} \\ 
    Average RE loan rating & B & \scriptsize{Median of 9 direct
                                 investments in GCPF} \\ 
    Bank PD (10 years) & $19.9\%$ & \scriptsize{Extrapolated to 10 years
                                    from 5 year default rate of $B$-rated
                                    financials ($10.5\%$)} \\ 
    RE loan PD (10 years) & $24.21\%$ & \scriptsize{Average
                                        cumulative default rate of
                                        $B$-rated global corporates}
    \\ 
    Weight of RE loan in bank balance sheet
             & 1\% & \scriptsize{Verification of several financial
                     institutions in the GCPF asset pool}\\  
    Number of direct RE loans in asset pool & 9 & \scriptsize{from
                                                  GCPF} \\ 
    Percentage notional of all direct RE loans
             & 10.61\% & \scriptsize{total percentage notional of
                         direct investments in GCPF}\\
    Senior tranche hitting probability
    (AAA, 10 years) & 0.70\% & \scriptsize{Average cumulative default
                               rate of $AAA$-rated corporates, see
                               Table 24 of \citet{SP2019}} \\
    Asset correlation among RE loans
             & 0.1170 &\scriptsize{Median of historical correlation among 1-day
                        returns of nine largest constituents in  MSCI Global
                        Alternative Energy Index (USD), Dec 2018--Dec
                        2019, Source: finance.yahoo.com}\\ 
    Asset correlation among bank loans
             & 0.1758 & \scriptsize{Median of historical correlation among 1-day
                        returns of nine largest constituents in  MSCI Emerging
                        Markets Financial Index (USD), Dec 2018--Dec
                        2019, Source: finance.yahoo.com} \\  
    Asset correlation among bank and RE loans
             & 0.1434 & \scriptsize{Calculated in a one-factor model
                        as $\sqrt{0.1170\cdot 0.1758}$}\\
    Recovery rate & 25\% & \scriptsize{Corporate Asset recovery rates
                           for senior secured bonds in a AAA CDO
                           tranche are estimated as 17\% (Group 4
                           countries) and 32\% (Group 3 countries),
                           see Table 10 of \citet{SP2015}, where Group
                           3/4 countries are emerging market countries}\\  
    Equity / asset ratio (banks' balance sheets)
             & 10\% & \scriptsize{Approximation of several financial
                      institutions in the GCPF asset pool} \\
    \hline\hline
  \end{tabular}
  \caption{Parameters in example. Data sources: \citet{SP2019},
    \citet{SP2015}, \citet{GCPF2019}, 
    finance.yahoo.com.}
  \label{tab:setup}
\end{table}

Figure \ref{fig:basecase1} shows the tranche attachment point
$\alpha^\ast$, tranche spread $s$, tranche delta $\Delta_{RE}$ and
tranche sensitivity $\text{PV01}_{RE}$, when (i) varying the number of
RE loans, keeping the tranche weight $w=10.61\%$ fixed and (ii)
varying the number of RE loans, keeping each loan's weight fixed. In
case (i), the number of loans plays virtually no role, except for the
attachment point, which decreases as an increasing number of loans
improves diversification in the asset pool. The credit spread is
constant, reflecting that variations in the cash flow structure
compared to the AAA-loan used in obtaining the optimal attachment
point can be neglected.

Figure \ref{fig:basecase2} shows the same properties when varying the
weight $w$, while keeping number of RE loans fixed at $n=5$. Each
graph shows three scenarios: (i) the base scenario with the observed
10-year PD of RE loans of $24.21\%$, (ii) a scenario with high credit
quality RE loans (10-year PD $1\%$) and (iii) a scenario with low
credit quality RE loans (10-year PD $40\%$). For comparison purposes,
the bank loan PD is $19.9\%$. The scenarios yield different shapes of
$\displaystyle \frac{\partial}{\partial w} \alpha^\ast(w)$, cf.\ part
(ii) of Proposition \ref{prop:optimal}. Depending on the choice of
$\alpha_{\max}$, which determines the minimum required senior tranche
size, the optimal the optimal RE loan weight $w^\ast$ will be in
$(0,1)$ in scenarios (i) and (ii), whereas in scenario (iii), we
always have $w^\ast=1$ if a solution exists, as the sensitivity
$\text{PV01}_{RE}$ (bottom right) increases with decreasing
$\alpha^\ast(w)$. The base scenario with the data from Table
\ref{tab:setup} is optimal if $\alpha^\ast(0.1061)=\alpha_{\max}$,
which translates into $\alpha_{\max}=0.3168$ for $n=9$ and
$\alpha_{\max}=0.3179$ for $n=5$.

An interesting observation from Figure \ref{fig:basecase2} is that a
low RE PD leads to a higher RE loan sensitivity. Two effects
contribute to this: first, a 1 bp change in the RE loan spread
translates differently into the credit quality change depending on the
initial RE PD level; second, a low RE PD implies a low attachment
point $\alpha^\ast$, which in turn increases the tranche's sensitivity
to RE loans. The latter effect also implies that an increase in
correlations between RE loans and bank loans may fail to 
increase RE loan sensitivity, as the smaller diversification decreases
the senior tranche size. For example, for $n=5$, $\alpha^\ast$ shifts
from $0.3179$ in the base scenario to $0.6707$ when correlations
between bank loans and between RE loans are set to $\sqrt{0.5}$. In
turn, the
$\text{PV01}_{RE}$ shifts from $-0.1$ basis points to  $-0.034$ basis points. 


\begin{figure}[t]
  \centering \includegraphics[scale=1.5]{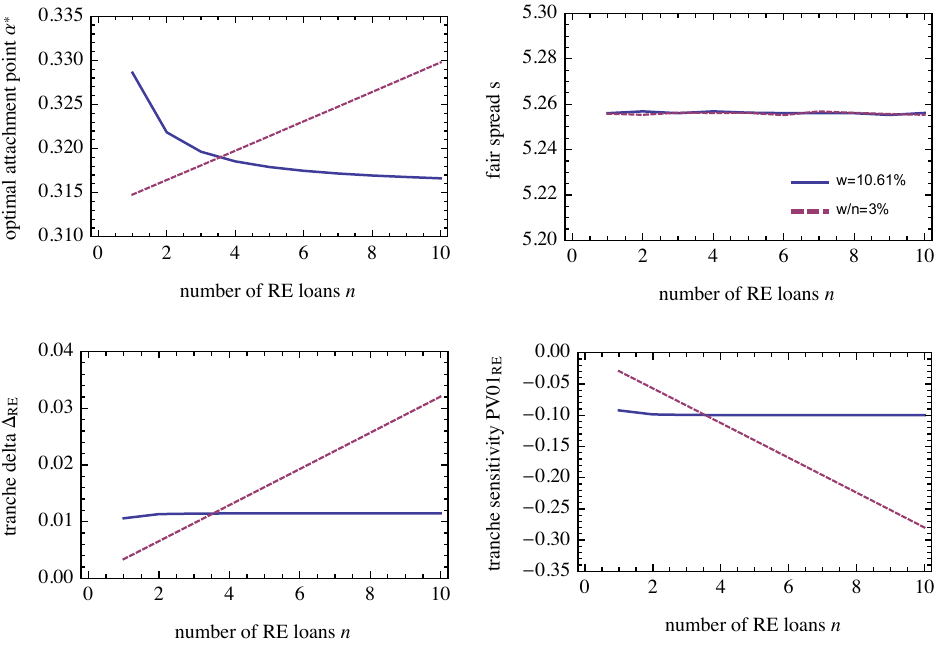}
  \caption{Tranche properties as a function of the number of
    loans. Solid: RE loan weight $w=10.61\%$ is fixed ( each loan has
    weight $w/n$); dashed: each loan has constant weight of $w/n=3\%$.
    Top left: optimal attachment point $\alpha^\ast$; top right: fair
    spread $s$; bottom left: tranche delta $\Delta_{RE}$; bottom
    right: tranche sensitivity $\text{PV01}_{RE}$. }
  \label{fig:basecase1}
\end{figure}

\begin{figure}[t]
  \centering \includegraphics[scale=1.5]{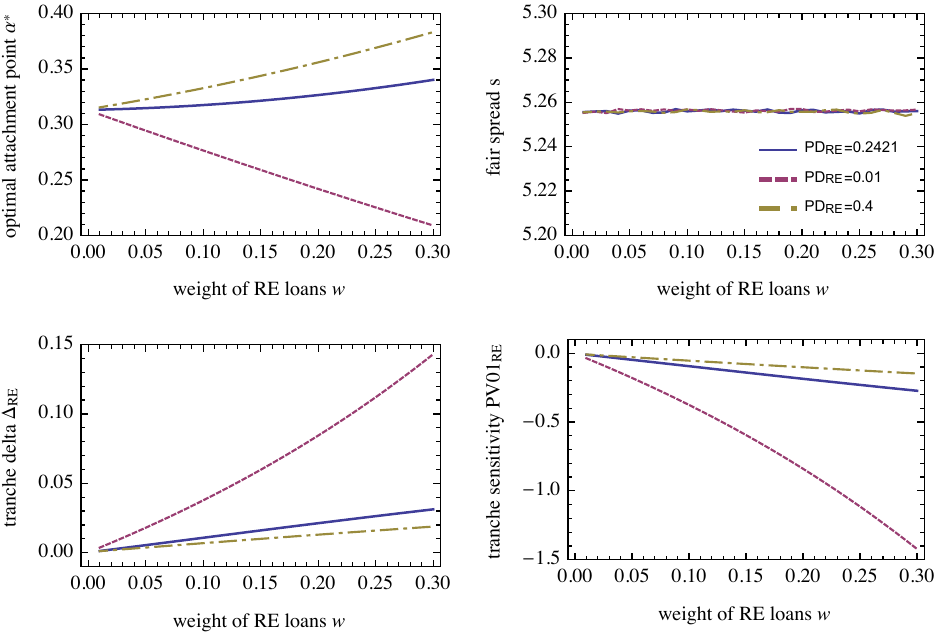}
  \caption{Tranche properties as a function of RE loan weight; number
    of loans is fixed at $n=5$; different PD's for RE loans are
    assumed (see legend in top right graph). Top left: optimal
    attachment point $\alpha^\ast$; top right: fair spread $s$; bottom
    left: tranche delta $\Delta_{RE}$; bottom right: tranche
    sensitivity $\text{PV01}_{RE}$.}
  \label{fig:basecase2}
\end{figure}

\section{Conclusion and Outlook}
\label{sec:conclusion}

We study public-private partnerships that have a CDO-like investment
structure. Here, the public sector invests in the equity tranche,
while institutional investors would typically invest in the senior
tranches. The risk transfer from restructuring the asset pool's cash
flows makes the investment attractive or accessible for risk-averse
institutional investors. These types of financial vehicles have been
issued in a development finance context, with an explicit goal to
promote financing of RE projects in emerging and developing
countries. The asset pool is primarily composed of loans to regional
banks, which in turn provide direct financing of RE
investments. Typically, only few direct investments in RE projects are
contained in the asset pool. As such, although the investment into the
structured fund is channelled into RE projects, this asset pool
composition creates a sensitivity mainly to banks in developing and
emerging markets, and to the RE sector only to a lesser
extent. Assuming that investors would also seek exposure to RE, the
paper provides an answer to questions revolving around the optimal
asset pool setup and structuring.

First, we develop a framework for studying this type of problem by
introducing the LH++ model, a Merton-type model in which the asset
pool consists of an infinitely granular homogeneous portfolio of bank
exposures and one or several large direct RE investment. We derive
closed formulas for CDO tranche valuation, which in turn allow to
calculate sensitivities such as tranche deltas and PV01's against RE
loans. Increasing the proportion of the direct investment increases
the RE exposure. However, since direct investments are larger, this
also decreases diversification in the asset pool, which potentially
decreases the size of the senior tranche (which is characterised by a
target AAA rating).

In our stylised framework, we determine the optimal asset pool mix,
which maximises RE exposure given a minimum senior tranche size and a
desired rating. We show that, in a typical setting, where RE loans
have a lower credit quality than bank loans, the optimal proportion of
RE loans has weight smaller than $1$.

\appendix%
\normalsize

\section{Proof of Proposition \ref{prop:3}}
\label{sec:proofs}

\begin{lemma}
  \label{lemma1}
  Let $X$ and $Y$ be independent, standard normally distributed random  
  variables. Then,
  \begin{align*}
    \E\left[X\, \e^{\sigma_X X}\right] &= \sigma_X\, \e^{\sigma_X^2/2}\\[5pt]
    \E\left[X\, \e^{\sigma_X X + \sigma_Y Y}\right] %
    &= \sigma_X\, \e^{\sigma_X^2/2 + \sigma_Y^2/2}. 
  \end{align*}
\end{lemma}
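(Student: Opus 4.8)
The plan is to reduce both identities to the elementary Gaussian moment generating function $\E[\e^{\sigma X}] = \e^{\sigma^2/2}$ for $X\sim\Ncdf(0,1)$, which itself follows by completing the square in the defining integral.

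First I would establish the one-variable identity. Writing the expectation as an integral of $x\,\e^{\sigma_X x}$ against the standard normal density and completing the square in the exponent, $-x^2/2 + \sigma_X x = -(x-\sigma_X)^2/2 + \sigma_X^2/2$, gives
\[
  \E\left[X\, \e^{\sigma_X X}\right] = \e^{\sigma_X^2/2}\int_{-\infty}^\infty x\, \frac{1}{\sqrt{2\pi}}\, \e^{-(x-\sigma_X)^2/2}\, \dd x = \e^{\sigma_X^2/2}\, \E[Z],
\]
where $Z\sim\Ncdf(\sigma_X,1)$; since $\E[Z]=\sigma_X$ the first claim follows. Equivalently, one may differentiate the scalar identity $\E[\e^{\sigma_X X}]=\e^{\sigma_X^2/2}$ with respect to $\sigma_X$, which is the route that best matches the flavour of the paper's other computations.

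For the second identity I would simply use independence of $X$ and $Y$ to factor the expectation, $\E\left[X\,\e^{\sigma_X X + \sigma_Y Y}\right] = \E\left[X\,\e^{\sigma_X X}\right]\cdot\E\left[\e^{\sigma_Y Y}\right]$, then apply the first identity to the first factor and $\E[\e^{\sigma_Y Y}]=\e^{\sigma_Y^2/2}$ to the second; multiplying yields $\sigma_X\,\e^{\sigma_X^2/2+\sigma_Y^2/2}$.

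There is no genuine obstacle; the only point that deserves a word is the interchange of differentiation and expectation (or, in the integral approach, the absolute integrability of $x\,\e^{\sigma_X x}$ against the Gaussian density), both of which are immediate from Gaussian tail bounds. The lemma then serves purely as a computational tool in the proof of Proposition \ref{prop:3}, where terms of the form $\E[X\,\e^{aX+bY}]$ arise from the first-order Taylor expansion of the log-return.
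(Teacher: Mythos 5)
Your proof is correct and follows essentially the same route as the paper: completing the square in the Gaussian integral for the first identity, and factoring the second expectation by independence (which the paper carries out explicitly via Fubini on the double integral). The additional remarks on differentiating the moment generating function and on integrability are fine but not needed.
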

\begin{proof}
  The first identity is easily calculated to be 
  \begin{equation*}
    \E\left[X\, \e^{\sigma X}\right] = \frac{1}{\sqrt{2\pi}} \int x\,
    \e^{\sigma_X x}  \e^{-x^2/2}\, \dd x
    = \frac{1}{\sqrt{2\pi}} \int x\, \e^{\sigma_X^2/2}\,
    \e^{-\frac{(x-\sigma_X)^2}{2}}\, \dd x
    = \sigma_X\, \e^{\sigma_X^2/2}.
  \end{equation*}
  Using this, the second identity is calculated as
  \begin{align*}
    \E\left[X\, \e^{\sigma_X X + \sigma_Y Y}\right] %
    &= \frac{1}{\sqrt{2\pi}} \frac{1}{\sqrt{2\pi}} \int \int x\, 
      \e^{\sigma_X X + \sigma_Y y} \e^{-x^2/2} \e^{-y^2/2}\, \dd x\,
      \dd y \\ %
    &= \frac{1}{\sqrt{2\pi}} \frac{1}{\sqrt{2\pi}} \int \e^{\sigma_Y
      y}\, \e^{-y^2/2}\, \dd y\, \int x\, \e^{\sigma_X x}\,
      \e^{-x^2/2}\, \dd x \\ %
    &= \e^{\sigma_Y^2/2}\, \sigma_X \, \e^{\sigma_X^2/2}. 
  \end{align*}
\end{proof}

\begin{proof}[Proof of Proposition \ref{prop:3}]
  Let $W^B$ and $W^R$ be the Brownian motions driving the respective
  asset processes. In the calculations below, write $W^R = \rho_{RB}
  W_1^B + \sqrt{1-\rho_{RB}^2} Z$, with $Z\sim \Ncdf(0,1)$ independent
  of $W_1^B$. Recall that the variance of a log-normally distributed
  random variable $\e^X$ with $X\sim \Ncdf(0,\sigma^2)$ is
  $\e^{\sigma^2} (\e^{\sigma^2}-1)$. 
  The variance of the right-hand side of
  \eqref{eq:27} is given by
  \begin{multline*}
    \text{Var}\left(\ln(A_1) + \frac{R_1}{A_1}\right) %
    = \text{Var} \left((r-\frac{1}{2}\sigma_B^2) + \sigma_B W_1^B +
      \frac{R_0}{A_0} \e^{(r-1/2 \sigma_R^2) + \sigma_R W_1^R -
        (r-1/2\sigma_B^2) - \sigma_B W_1^B}\right)\\
    \begin{aligned}
      &= \sigma_B^2 + \frac{R_0^2}{A_0^2}
      \e^{(\sigma_B^2-\sigma_R^2)}\, \text{Var}\left(\e^{(\rho_{RB}\sigma_R
          -\sigma_B) W_1^B + \sqrt{1-\rho_{RB}^2}\sigma_R Z}\right)\\  
      &\phantom{=\,} %
      +2\, \sigma_B\, \frac{R_0}{A_0}\, \e^{(\sigma_B^2-\sigma_R^2)/2}
      \, \text{Cov}\left(W_1^B, \e^{(\rho_{RB}\sigma_R - \sigma_B) W_1^B +
          \sqrt{1-\rho_{RB}^2} \sigma_R Z}
      \right)\\
      &= \sigma_B^2 + \frac{R_0^2}{A_0^2} \e^{(\sigma_B^2-\sigma_R^2)}
      \e^{(\rho_{RB}\sigma_R-\sigma_B)^2 +  (1-\rho_{RB}^2)\sigma_R^2}
      (\e^{(\rho_{RB}\sigma_R-\sigma_B)^2 + (1-\rho_{RB}^2)\sigma_R^2}-1)\\ %
      &\phantom{=\,} %
     + 2 \sigma_B \frac{R_0}{A_0} \e^{(\sigma_B^2-\sigma_R^2)/2}
     (\rho_{RB}\sigma_R - \sigma_B) \e^{((\rho_{RB}\sigma_R -
       \sigma_B)^2 + \sigma_R^2 (1- \rho_{RB}^2))/2} \\
     &= \sigma_B^2 + \frac{R_0^2}{A_0^2} \e^{2\sigma_B(\sigma_B -
       \rho_{RB} \sigma_R)} (\e^{\sigma_B^2 + \sigma_R^2 - 2 \rho_{RB}
       \sigma_B \sigma_R} -1) %
     + 2\sigma_B\frac{R_0}{A_0} (\rho_{RB} \sigma_R - \sigma_B)
     \e^{\sigma_B^2 - \rho_{RB}\sigma_R\sigma_B}. 
    \end{aligned}
  \end{multline*}

  For the correlation we calculate the required covariance. In the
  calculations we decompose the (correlated) vector  $\mathbf W^T =
  (W_1^{B,i}, W_1^{B,j},  W_1^{R,i}, W_1^{R,j})^T$ into a Cholesky
  factorisation $\mathbf C$ with independent normals $\mathbf Z^T = (Z_1,
  \ldots, Z_4)^T$:
  \begin{equation*}
    \mathbf W = \mathbf C\cdot \mathbf Z,
  \end{equation*}
  where 
  \begin{equation*}
    C =
    \begin{pmatrix}
      1 & 0 & 0 & \\
      \rho_B & \sqrt{1-\rho_B^2} & 0 & 0\\
      \rho_{RB}& \rho_{RB} \frac{1-\rho_B}{\sqrt{1-\rho_B^2}} & \sqrt{\frac{1 +
          \rho_B -2 \rho_{RB}^2}{1+\rho_B}} & 0\\
      \rho_{RB} &  \rho_{RB} \frac{1-\rho_B}{\sqrt{1-\rho_B^2}} & \frac{\rho_R +
        \rho_R \rho_B - 2\rho_{RB}^2} {\sqrt{(1+\rho_B) (1+ \rho_B - 
          2\rho_{RB}^2)}} & \sqrt{\frac{(1+\rho_B) (1-\rho_R^2 ) - 4\rho_{RB}^2
        (1+\rho_R)} {1 + \rho_B - 2\rho_{RB}^2}}
    \end{pmatrix}.
  \end{equation*}
  The covariance is given by: 
  \begin{multline*}
    \text{Cov}\left(\ln(A_1^i) + \frac{R_1^i}{A_1^i}, \ln(A_1^j) +
      \frac{R_1^j}{A_1^j}\right) \\
    \begin{aligned}
      &= \text{Cov}\left(\sigma_B W_1^{B,i}, \sigma_B
        W_1^{B,j}\right) %
      +2 \text{Cov} \left(\sigma_B W_1^{B,i}, \frac{R_0}{A_0}
        \e^{(\sigma_B^2-\sigma_R^2)/2 + \sigma_R W_1^{j,R} - \sigma_B
          W_1^{j,B}}\right)\\ %
      &\phantom{=\,}+ \text{Cov} \left(\frac{R_0}{A_0}
        \e^{(\sigma_B^2-\sigma_R^2)/2 + \sigma_R W_1^{i,R} - \sigma_B
          W_1^{i,B}}, \frac{R_0}{A_0} \e^{(\sigma_B^2-\sigma_R^2)/2 +
          \sigma_R W_1^{j,R} - \sigma_B
          W_1^{j,B}}\right)\\
      &= \rho_B\sigma_B^2 %
      +2 \sigma_B \frac{R_0}{A_0} \e^{(\sigma_B^2 - \sigma_R^2)/2}
      \text{Cov} \left(Z_1, \e^{\sigma_R (c_{41} Z_1 + c_{42} 
          Z_2 + c_{43} Z_3 + c_{44} Z_4)- \sigma_B (c_{21} Z_1 +
          c_{22} Z_2)}\right)\\ %
      &\phantom{=\,}+ \frac{R_0^2}{A_0^2} \e^{(\sigma_B^2-\sigma_R^2)}
      \text{Cov} \left( \e^{\sigma_R (c_{31} Z_1 + c_{32} Z_2 + c_{33}
          Z_3) - \sigma_B Z_1}, \e^{\sigma_R (c_{41} Z_1 + c_{42} Z_2
          + c_{43} Z_3 + c_{44} Z_4) - \sigma_B (c_{21} Z_1 + c_{22}
          Z_2)}\right)\\
      &= \rho_B \sigma_B^2 + 2\sigma_B \frac{R_0}{A_0} \e^{(\sigma_B^2
        - \sigma_R^2)/2} (\sigma_R c_{41} -\sigma_B c_{21})
      \e^{((\sigma_R c_{41}-\sigma_B c_{21})^2 + (\sigma_R c_{42}
        -\sigma_B c_{22})^2 + \sigma_R^2 c_{43}^2 + \sigma_R^2
        c_{44}^2)/2}\\ %
      &\phantom{=\,}+ \frac{R_0^2}{A_0^2} \e^{(\sigma_B^2-\sigma_R^2)}
        \e^{((\sigma_R c_{31} -\sigma_B + \sigma_R c_{41} - \sigma_B
          c_{21})^2 + (\sigma_R c_{32} + \sigma_R c_{42} -
          \sigma_B c_{22})^2 + (\sigma_R c_{33} + \sigma_R
          c_{43})^2 + \sigma_R^2 c_{44}^2)/2}\\
        &\phantom{=\,} - \frac{R_0^2}{A_0^2}
        \e^{(\sigma_B^2-\sigma_R^2)} %
        \e^{((\sigma_R c_{31} - \sigma_B)^2 + \sigma_R^2 c_{32}^2 +
          \sigma_R^2 c_{33}^2 + (\sigma_R c_{41} - \sigma_B c_{21})^2
          + (\sigma_R c_{42} - \sigma_B c_{22})^2 + \sigma_R^2
          c_{43}^2 + \sigma_R^2 c_{44}^2)/2}
        \\
        &= \rho_B \sigma_B^2 + 2\sigma_B \frac{R_0}{A_0}
        \e^{\sigma_B^2 - \rho_{RB} \sigma_B \sigma_R} (\rho_{RB} \sigma_R - \rho_B
        \sigma_B)
        + \frac{R_0^2}{A_0^2}
        \e^{2\sigma_B^2 - 2 \rho_{RB} \sigma_B \sigma_R} 
          \left(\e^{\rho_B \sigma_B^2 + \rho_R \sigma_R^2 -
              2\rho_{RB}\sigma_B\sigma_R} -1\right).
    \end{aligned}
  \end{multline*}
  The correlation is obtained by dividing by the respective variance.

  For the correlation of the modified bank's balance sheet with a
  single RE loan, we obtain
  \begin{multline*}
    \text{Cov}\left(\ln(A_1^i) + \frac{R_1^i}{A_1^i}, \ln(R_1)\right)
    \\%
    \begin{aligned}
      &=\text{Cov}\left(\sigma_B W_1^{i,B} + \frac{R_0}{A_0}
          \e^{(\sigma_B^2-\sigma_R^2)/2 + \sigma_R W_1^{i,R} -
            \sigma_B W_1^{i,B}}, \sigma_R W_1^{j,R}\right)\\
        &= \rho_{RB} \sigma_B\sigma_R + \frac{R_0}{A_0} \e^{(\sigma_B^2 -
          \sigma_R^2)/2} \sigma_R \text{Cov}\left(\e^{\sigma_R
            W_1^{i,R} - \sigma_B W_1^{i,B}}, W_1^{j,R}\right)\\
        &= \rho_{RB} \sigma_B\sigma_R + \sigma_R \frac{R_0}{A_0} \e^{(\sigma_B^2 -
          \sigma_R^2)/2} \text{Cov}\left( \e^{(\sigma_R
            c_{31}-\sigma_B) Z_1 + \sigma_R c_{32}Z_2 + \sigma_R c_{33} Z_3}, c_{41} Z_1
          + c_{42} Z_2 + c_{43} Z_3 + c_{44} Z_4\right)\\
        &= \rho_{RB} \sigma_B\sigma_R + \sigma_R \frac{R_0}{A_0} \e^{(\sigma_B^2 -
          \sigma_R^2)/2} \Big\{ (\sigma_R c_{31} - \sigma_B) c_{41}
          \e^{((\sigma_R c_{31} - \sigma_B)^2 + \sigma_R^2 c_{32}^2 +
            \sigma_R^2 c_{33}^2)/2} \Big.\\%
          &\phantom{=\,} \Big.+ \sigma_R c_{32} c_{42} \e^{(\sigma_R^2 c_{32}^2 +
            (\sigma_R c_{31} - \sigma_B)^2 + \sigma_R^2 c_{33}^2)/2} %
          + \sigma_R c_{33} c_{43} \e^{(\sigma_R^2 c_{33}^2 +
            (\sigma_R c_{31}-\sigma_B)^2 + \sigma_R^2 c_{32}^2)/2}
          \Big\}\\
           &= \rho_{RB} \sigma_B\sigma_R + \sigma_R \frac{R_0}{A_0} \e^{(\sigma_B^2 -
          \sigma_R^2)/2} \left( (\sigma_R c_{31} - \sigma_B) c_{41} +
          \sigma_R c_{32} c_{42} + \sigma_R c_{33} c_{43}\right)
          \e^{((\sigma_Rc_{31}-\sigma_B)^2 + \sigma_R^2 c_{32}^2 +
            \sigma_R^2 c_{33}^2)/2}\\
           &= \rho_{RB} \sigma_B\sigma_R + \sigma_R \frac{R_0}{A_0}
           \e^{\sigma_B^2 - \rho_{RB}\sigma_B \sigma_R} (\rho_R \sigma_R - \rho_{RB} \sigma_B). 
    \end{aligned}
  \end{multline*}
  The correlation is obtained by dividing by the respective standard
  deviations.
\end{proof}

\bibliographystyle{abbrvnamed}%
\bibliography{finance}
\end{document}